\definecolor{Gray}{gray}{0.8}
\definecolor{Lightgray}{gray}{0.9}
\definecolor{Darkgray}{gray}{0.5}
\newcommand{\cref}[1]{Corollary~\ref{#1}}
\newcommand{\G}{\Gamma}
\newcommand{\E}{[m]}
\newcommand{\N}{\ensuremath{\mathcal{N}}}
\newcommand{\Opt}{\ensuremath{\textsf{OPT}}}
\newcommand{\OPT}{\Opt}
\newcommand{\SC}{\ensuremath{\textsf{SC}}}
\newcommand{\PC}{\ensuremath{\textsf{PC}}}
\newcommand{\PoA}{\ensuremath{\textsf{PoA}}}
\newcommand{\poa}{\PoA}
\newcommand{\vect}[1]{\ensuremath{\boldsymbol{#1}}}
\renewcommand{\S}{\ensuremath{\mathcal{S}}}
\newcommand{\lsel}{\delta}
\newcommand{\lmal}{\kappa}
\newcommand{\s}{\vect s}
\newcommand{\Q}{\mathbf{Q}}
\newcommand{\NP}{\cal{NP}}
\newcommand{\xn}{\overline{x}}
\title{{\bf Malicious Bayesian Congestion Games}%
         \thanks{This work was supported by a fellowship within the Postdoc-Programme 
                        of the German Academic Exchange Service (DAAD).}
}
\author{Martin Gairing}
\institute{International Computer Science Institute,
           Berkeley, CA, USA.\\
           \email{gairing@icsi.berkeley.edu}
}
\begin{document}
\maketitle

\newcommand{\games}{\ensuremath{\mathcal{G}}}

\newcommand{\pc}{\PC}
\newcommand{\PoB}{\ensuremath{\textsf{PoB}}}
\newcommand{\pob}{\PoB}
\newcommand{\PoM}{\ensuremath{\textsf{PoM}}}
\newcommand{\pom}{\PoM}
\newcommand{\WoM}{\ensuremath{\textsf{WoM}}}
\newcommand{\wom}{\WoM}

\newcommand{\ntf}{\ensuremath{\left\lfloor \frac{n}{2} \right\rfloor}}
\newcommand{\ntc}{\ensuremath{\left\lceil \frac{n}{2} \right\rceil}}
\newcommand{\nmf}{\ensuremath{\left\lfloor \frac{n}{r} \right\rfloor}}
\newcommand{\nmc}{\ensuremath{\left\lceil \frac{n}{r} \right\rceil}}

\newcommand{\tsl}{\Delta}

\newcommand{\p}{\vect p}
\newcommand{\vsi}{\ensuremath{\bm{\sigma}}}










%


%
\begin{abstract}
In this paper, we 
introduce \emph{malicious Bayesian congestion games} as an extension to
congestion games where players might act in a malicious way. 
In such a game each player has two \emph{types}. 
Either the player
is a rational player seeking to minimize her own delay, or -- with a certain probability -- the player
is \emph{malicious} in which case her only goal is to disturb the other players as much as possible.  

We show that such games do in general not possess a Bayesian Nash equilibrium in pure strategies
(i.e. a \emph{pure Bayesian Nash equilibrium}). 
Moreover, given a game, we show that it is \NP-complete
to decide  whether it admits a pure Bayesian Nash equilibrium.
This result even holds when resource latency 
functions are linear, each player is malicious with the same probability, and all strategy sets consist of
singleton sets of resources.
For a slightly more restricted class of malicious Bayesian congestion games, we provide easy checkable properties
that are necessary and sufficient for the existence of a pure Bayesian Nash equilibrium.

In the second part of the paper we study the impact of the malicious types on the overall performance 
of the system (i.e. the \emph{social cost}). To measure this impact, we use the \emph{Price of Malice}. 
We provide (tight) bounds on the Price of Malice for 
an interesting class of malicious Bayesian congestion games. Moreover, we show that for 
certain congestion games the advent of malicious types can also be beneficial 
to the system in the sense that the social cost of the worst case equilibrium decreases. We provide 
a tight bound on the maximum factor by which this happens.
\end{abstract}
%
%
%
%
\section{Introduction}
\label{s:intro}
\textbf{Motivation and Framework.}
Over the last decade, the study of strategic behavior in distributed systems 
has improved our understanding of modern computer artifacts such as the Internet.
Normally, the users of such distributed systems are modeled as rational, utility optimizing players. 
However, in many real world scenarios, users do not necessarily act rational, but rather \emph{irrational}. 
In this paper, we address one form of irrationality, namely, we allow that players act  in a \emph{malicious} way. In this case, the only goal of a malicious player is to disturb the (non-malicious) 
players as much as possible. 
The presence of \emph{Denial of Service attacks} in the Internet is an example showing 
that such systems are quite realistic.
In many such systems with malicious players, the players have 
only \emph{incomplete information} about the set of 
malicious players. 
A standard approach for modeling games with incomplete information
uses the Harsanyi transformation~\cite{Har67}, which converts a  game with incomplete 
information to a
game where players have different \emph{types}. The type of a player represents its private information
that is not common knowledge to all players. In the resulting \emph{Bayesian game}, each player's
uncertainty about each other's type is described by a probability distribution.

One aspect of Game Theory that was studied extensively in recent years 
is the {\em Price of Anarchy} as introduced by Koutsoupias
and Papadimitriou~\cite{KP99}. The Price of Anarchy is the worst case ratio between the value of the 
{\em social cost} in an equilibrium state of the system and that of some social optimum.
Usually, the equilibrium state is defined as {\em Nash equilibrium} -- a state in which no player can 
unilaterally improve her private objective function, also coined as {\em private cost}. 
A Nash equilibrium is {\em pure}  if all players choose a pure strategy and {\em mixed} if players
choose probability distributions over pure strategies.


While the celebrated result of Nash~\cite{Nas51} guarantees the existence of a mixed Nash equilibrium 
for ever finite game, pure Nash equilibria are not guaranteed to exists (see e.g. \cite{FKS05,GMV05,LO01,Mil96}). 
An natural question to ask, is whether a given game possesses a pure Nash equilibrium or not.
We address this question by asking about the complexity of this decision problem.

A class of games that always possess pure Nash equilibria is the class of congestion games as 
introduced by Rosenthal~\cite{Ros73a}. Here, the strategy set of each player is a subset of the power set of
given resources, the latency on each resource is described by a latency function in the number of
players sharing this resource, and the private cost of each player is the sum of the latencies of its
chosen resources. 
Milchtaich~\cite{Mil96} considered weighted congestion games as an extension to
congestion games in which the players have weights and thus different influence on the
latency of the resources. 

To measure the influence of malicious behavior, Moscibroda et al.~\cite{MSW06} introduced the 
{\em Price of Byzantine Anarchy} as the worst case ratio between the social cost in an equilibrium state
of the system under some assumption on the malicious players and the social cost of some social 
optimum without malicious players. 
They further define the {\em Price of Malice} as the ratio between the Price of 
Byzantine Anarchy and the Price
of Anarchy. We will use a similar definition and define the equilibrium state 
as a Bayesian Nash equilibrium. 

\medskip\noindent
\textbf{Contribution.}
In this paper, we 
introduce \emph{malicious Bayesian congestion games} as an extension to
congestion games where players might act in a malicious way. 
Following  Har\-sanyi's transformation \cite{Har67}, we  allow each player to be of two \emph{types}. 
Either the player
is a rational player seeking to minimize her own delay, or -- with a certain probability -- the player
is \emph{malicious} in which case her only goal is to disturb the other players as much as possible.  
For such games we study the complexity of deciding whether a given game has a pure 
Bayesian Nash equilibrium. Moreover, we study the impact of the malicious types on the overall performance 
of the system (i.e. the \emph{social cost}). To measure this impact, we use the \emph{Price of Malice},
which we define similarly as Moscibroda et al.~\cite{MSW06}. 

We now describe our results in more detail.
As our main result, we show that it is \NP-complete
to decide  whether a given malicious Bayesian congestion game admits a pure Bayesian Nash equilibrium.
This result even holds if resource latency 
functions are linear, each player is malicious with the same probability, and all strategy sets consist 
singleton sets (Theorem~\ref{t:npcscg}). The same result even holds if we further restrict to the case that
each player has at most four strategies and at most three players can be assigned to each resource 
(Theorem~\ref{t:npcscg_ex}).
For {\em symmetric} Bayesian congestion games with identical type probability, 
identical latency functions and
strategy sets that consist only of singletons, we provide easy checkable properties
that are necessary and sufficient for the existence of a pure Bayesian Nash equilibrium
(Theorem~\ref{t:characterization}).

We then shift gears and present results related to the Price of Malice. 
For general malicious Bayesian congestion games with linear latency functions, 
we show an upper bound
on the the Price of Byzantine Anarchy (Theorem~\ref{t:pob}).  
Moreover, we proof a lower bound on the same ratio that already holds for the case of  identical
type probabilities (Theorem~\ref{t:lbi}). 
As a corollary, we get an asymptotic tight bound on the Price of Malice (Corollary~\ref{c:pom}).
We close the paper with a tight lower bound on the maximum factor by which the social cost of a worst 
case (Bayesian) Nash equilibrium of a congestion game might decrease by introducing
 malicious types (Theorem~\ref{t:wom}). 

\medskip\noindent
\textbf{Related Work.}
Congestion games and variants thereof have long been used to model non-cooperative resource
sharing among selfish players. Rosenthal~\cite{Ros73a} showed that congestion games always possess
pure Nash equilibria. The complexity of computing such a pure Nash equilibrium has been settled 
for arbitrary latency functions by Fabrikant et al.~ \cite{FPT04} and later for linear latency functions by Ackermann et al.~\cite{ARV06a}. 
On the other hand, for weighted congestion games, Libman and Orda~\cite{LO01},
Fotakis et al.~\cite{FKS05} and Goemans et al.~\cite{GMV05} provide examples that do not 
allow for a pure
Nash equilibrium. 
Dunkel and Schulz~\cite{DS06} showed that it is \NP-complete to decide the existence 
of a pure Nash equilibrium for a given weighted congestion games. 

The Price of Anarchy for weighted congestion games has been studied extensively
(see e.g. \cite{AAE05,ADGMS06,CK05}). In case of linear latency functions, 
the Price of Anarchy is exactly $\frac{5}{2}$ for unweighted congestion games \cite{CK05} and
$1+\Phi$ for weighted congestion games \cite{AAE05}, where $\Phi=\frac{1+\sqrt{5}}{2}$ is the golden ratio.
The exact value of the Price of Anarchy is also known for the case of polynomial latency functions \cite{ADGMS06}.
For bounds on the Price of Anarchy of (weighted) congestion games with each strategy set being 
a singleton set of resources, we refer to
\cite{GS07} and references therein.

Several recent papers considered games allowing for malicious player behavior  
\cite{BKP07,KV03,MSW06}. 
Moscibroda et al.~\cite{MSW06} introduced the Price of Malice and gave bounds on the Price of Malice for 
a virus inoculation game where some of the players are malicious. In fact, our definition of Price of Malice
is motivated by the corresponding definition from this paper.  
Karakostas et al.~\cite{KV03} and Babaioff et al.~\cite{BKP07},  study malicious player behavior 
in {\em non-atomic} congestion games. Here, each player from a continuum of infinitely many players controls only an infinitesimal small amount of weight and a fraction of those players is malicious.  
In contrast to those papers, our games are atomic, and thus have only finitely many players. 
This yields to different results.

For general Bayesian games, questions concerning the complexity of 
deciding the existence of a pure Bayesian Nash equilibrium 
have been addressed in two recent works \cite{CS03,GGM07}. 
On the one hand, 
if the game is given in \emph{standard normal form}, i.e. the utility functions and the type probability
distribution are represented extensively as tables, then deciding the existence of a pure Bayesian
Nash equilibrium is \NP-complete \cite{CS03}.
On the other hand, if both -- the utility functions and the type probability distribution -- 
are succinctly encoded, then the problem becomes PP-complete \cite{GGM07}.  
In contrast to \cite{CS03}, malicious Bayesian congestion games are succinctly represented but they are
more structured as the games considered by Gottlob et al.~\cite{GGM07}.

A certain class  of Bayesian congestion game
has been introduced in 
\cite{GMT08}. 
Here, players act completely rational but they are uncertain about each others weight.
Among other results, the authors show that such games always possess pure Bayesian Nash equilibria
if latency functions are linear.

%
%

\section{Model}
\label{s:model}
\subsection{Congestion Games}
\label{s:cg}

\textbf{Instance.}
A {\em congestion game} $\Gamma$ is a tuple
$$
  \Gamma = \left(\N,E,(S_u)_{u\in\N},(f_e)_{e\in E}\right).
$$
Here, $\N$ is the set of {\em players} and $E$ is the
finite set of
{\em resources}. 
Throughout, we
denote $n=|\N|$ and $r=|E|$ and assume $n\ge2$ and $r\ge 2$.
For every player $u\in\N$, $S_u\subseteq 2^E$ is the {\em strategy set} of player $u$.
Denote 
$S=S_1\times\ldots\times S_n$. 
For every 
resource $e\in E$, the {\em latency function}
$f_e: \mathbb{N} \rightarrow \mathbb{R}$
is a non-negative, non-decreasing function that describes the 
{\em latency} on resource $e$. 
For most of our results, we consider {\em affine latency functions} with 
non-negative coefficients, that is, for all resources
$e\in E$, the latency function is of the form
$
  f_e(\delta)= a_e \cdot \delta + b_e
$
with $a_e, b_e \geq 0$. Affine latency functions are \emph{linear} if
$b_e=0$ for all $e\in E$.
A congestion game is called {\em symmetric}, if $S_u=S_u'$ for
any pair of players $u,u'$.

\medskip\noindent
\textbf{Strategies and Strategy Profiles.}
A {\em pure strategy} for player $u$ is some specific strategy $s_u\in S_u$, while a
{\em mixed strategy} $Q_u = (q(u,s_u))_{s_u \in S_u}$ is a
 a probability distribution over $S_u$, where 
 $q(u,s_u)$ denotes
the probability that player $u$ chooses the pure strategy $s_u$.

A {\em pure strategy profile} is an $n$-tuple $\s=(s_1,\ldots, s_n)$
whereas a {\em mixed strategy profile} $\Q=(Q_1,\ldots,Q_n)$ is represented
by an $n$-tuple of mixed strategies.
For a mixed strategy profile $\Q$, denote by
$$
   q(\s) = \prod_{u\in\N} q(u,s_u)
$$
 the probability that the players choose the pure strategy profile 
$\s$.

\medskip\noindent
\textbf{Load and Private Cost.}
For a pure strategy profile $\s$, denote by  
$\delta_e(\s) = \sum_{u\in\N: e\in s_u} 1$ the
{\em load} on resource $e\in\E$, i.e. the number of players assigned to $e$.
In the same way, for a partial strategy profile $s_{-i}$,
denote
$\delta_e(\s_{-i}) = \sum_{u\in\N, u\neq i: e\in s_u} 1$ the
{\em load} on resource $e\in\E$ without player $i$.

Fix a pure strategy profile
$\s$.
The {\em private cost} $\pc_u(\s)$ of player
$u\in\N$ is defined by the {\em latency} of the chosen resources. Thus 
$$
  \pc_u(\s)
  = \sum_{e\in s_u}f_e 
           \left( \delta_e(\s)\right).
$$
For a mixed strategy profile $\Q$, the {\em private cost} of 
player $u\in\N$ is
$$
  \pc_u(\Q)
  = \sum_{\s \in S} q(\s) \cdot \pc_u(\s)\; .
$$

\medskip\noindent
\textbf{Social Cost.}
Associated with a
congestion game $\Gamma$ and a pure strategy profile $\s$ is the
{\em social cost} $\SC(\G,\s)$ as a measure of social welfare.
In particular we use the expected average latency. That is,
\begin{eqnarray*}
\SC(\G,\s) &=&  \frac{1}{n}
             \sum_{e \in E}
               \delta_e(\s) \cdot f_e(\delta_e(\s))\\
&=& \frac{1}{n} \sum_{\s\in S} q(\s)
    \sum_{u\in\N}
    \sum_{e \in s_u}
      f_e(\delta_e(\s))\\
&=& \frac{1}{n} \sum_{u\in\N}  \pc_u(\Q).
\end{eqnarray*}
Observe, that this measure differs from the {\em total latency} \cite{RT02} only by the factor $n$.

The {\em optimum}
associated with a congestion game $\G$ is the least possible social cost,
over all pure strategy profiles $\s \in S$. Thus,
$$
  \OPT(\G)= \min_{\s\in S} {\SC}(\G,\s)\, .
$$

\medskip\noindent
\textbf{Nash Equilibria.}
Given a
congestion game and an associated mixed strategy profile
$\Q$, player $u \in \N$ is {\em satisfied} if the player cannot
improve its private cost by unilaterally changing its strategy.
Otherwise, player $u$ is {\em unsatisfied}. The mixed strategy
profile $\Q$ is a {\em Nash equilibrium} if and only if all
players $u \in \N$ are satisfied, that is,
$
  \pc_u(\Q) \leq \pc_u(\Q_{-u},s_u)
$
for all $u\in\N$ and $s_u\in S_u$.

Depending on the type of strategy profile we differ between
{\em pure} and {\em mixed} Nash equilibria.

\medskip\noindent
\textbf{Price of Anarchy.}
Let $\games$ be a class of congestion games.
The {\em Price of Anarchy}, also called {\em coordination ratio}
and denoted by $\poa$, is the supremum, over all instances
$\Gamma\in \games$ and Nash equilibria $\Q$, of the ratio
$\frac{\SC(\G,\Q)}{\OPT(\G)}$. Thus,
\begin{align*}
\poa & = \sup_{\Gamma\in\games, \Q} \frac{\SC(\G,\Q)}{\OPT(\G)}.
\end{align*}

\subsection{Malicious Bayesian Congestion Games}
\label{s:mbcg}
\noindent
\textbf{Instance.}
A {\em malicious Bayesian congestion game} $\Psi$ is an extension to congestion games,
where each player is malicious with a certain probability.
Following Harsanyi's approach, we model such a game with incomplete information
as a Bayesian game, where each player $u\in\N$ can be of two types:
Either $u$ is \emph{selfish} or \emph{malicious}.
For each type of player $u\in\N$ we introduce two independent 
type-agents $u^s$ and $u^m$, denoting the {\em selfish} and {\em malicious type-agent} 
of player $u$, respectively.

Let $p_u$ be the probability that player $u\in\N$ is malicious and call $p_u$ the \emph{type probability}
of player $u$.
Define the \emph{type probability vector} $\p=(p_1, \ldots, p_n)$
in the natural way. Denote $p_{\min}=\min_{u\in\N} p_u$.
In the case of identical type probabilities $p_u=p$ for all player $u\in\N$.
Define $\tsl=\sum_{u\in\N} p_u$ as the \emph{expected number of malicious players}. 
Observe, that for identical type probabilities $\tsl=p\cdot n$.
Denote by $\G_\Psi$ the congestion game that arises from the malicious Bayesian congestion game 
$\Psi$ by setting $p_u=0$ for all player $u\in\N$.

Summing up, a malicious Bayesian congestion game $\Psi$ is given by a tuple 
$$
   \Psi=\left(\N,E,(S_u)_{u\in\N}, (p_u)_{u\in\N}, (f_e)_{e\in E}\right).
$$

\medskip\noindent
\textbf{Strategies and Strategy Profiles.}
A pure strategy $\sigma_u$ for player $u\in \N$ is now a tuple  
$\sigma_u=(\sigma(u^s),\sigma(u^m)) \in S_u^2$, where $\sigma(u^s)$ and $\sigma(u^m)$
denote the strategy of the selfish type-agent and malicious type-agent of player $u$, 
respectively.
Denote $\vsi=(\sigma_1,\ldots,\sigma_n)$.
A {\em mixed strategy} $Q_i$ is now 
a probability distribution over $S_i\times S_i$. Define $\Q$ and $q(\vsi)$
as before.

\medskip\noindent
\textbf{Private Cost.}
For any type probability vector $\p$ and pure strategy profile
$\vsi$, 
denote the
{\em expected selfish load}  on resource $e\in E$ by 
$\lsel_e(\vsi) = \sum_{u\in\N: e\in \sigma(u^s)} (1-p_u)$ and  the
{\em expected malicious load} by
$\lmal_e(\vsi) = \sum_{u\in\N: e\in \sigma(u^m)} p_u$.
For a partial assignment $\vsi_{-u}$ define
$\lsel_e(\vsi_{-u})$ and $\lmal_e(\vsi_{-u})$ accordingly, by disregarding player $u$.

Fix any type probability vector $\p$ and pure strategy profile
$\vsi$.
The {\em private cost} $\pc_u(\p,\vsi)$ of player
$u\in\N$ is defined by
$$
  \pc_u(\p,\vsi)
  = \sum_{e\in \sigma(u^s)}f_e 
           \left( \lsel_e(\vsi_{-u}) + \lmal_e(\vsi_{-u}) + 1 \right).
$$
In other words $\pc_u(\p,\vsi)$ is the expected latency that player $u$ 
experiences if player  $u$ is selfish. For each player $u\in\N$, 
type-agent $u^s$ aims to minimize $\pc_u(\p,\vsi)$.
Observe, that  $\pc_u(\p,\vsi)$ does not depend on $\sigma(u^m)$.
For a mixed strategy profile $\Q$,  define
$
  \pc_u(\p,\Q)
$
accordingly.

\medskip\noindent
\textbf{Social Cost.}
Let $\Psi$ be a malicious Bayesian congestion game with type probability vector $\p$ 
and let $\Q$ be a pure strategy profile for $\Psi$.
We generalize the definition of 
{\em social cost} $\SC(\Psi,\Q)$ to the  
weighted average latency of the selfish type-agents. 
That is,
\begin{eqnarray*}
\SC(\Psi,\Q) 
&=& \frac{\sum_{u\in\N} (1-p_u) \cdot \pc_u(\p,\Q)}
                 {n-\tsl}.
\end{eqnarray*}

\medskip\noindent
\textbf{Bayesian Nash equilibria.}
A selfish type-agent is {\em satisfied} if she cannot unilaterally decrease her private cost,  that is,
$
  \pc_u(\Q) \leq \pc_u(\Q_{-u^s}, \sigma(u^s))
$
for all $u\in\N$ and $\sigma(u^s)\in S_u$.

In contrast to the selfish type-agents, each malicious type-agent aims to maximize social cost.
So, a malicious type-agent is {\em satisfied} if she cannot increase social cost by unilaterally changing
her strategy. 

For a malicious Bayesian congestion game, a mixed strategy
profile $\Q$ is a {\em Bayesian Nash equilibrium} if and only if both type-agents 
of all
players $u \in \N$ are satisfied. 
Depending on the type of strategy profile we again differ between
{\em pure} and {\em mixed} Bayesian Nash equilibria.

\medskip\noindent
\textbf{Price of Byzantine Anarchy and Price of Malice.}
For a fixed expected number of malicious players $\tsl$,
let $\games(\tsl)$ be the class of malicious Bayesian congestion games where
$\sum_{u\in\N} p_u = \tsl$.
Similarly to \cite{MSW06}, we define the
{\em Price of Byzantine Anarchy}, denoted by $\pob$, as the supremum, over all instances
$\Psi\in\games(\tsl)$ and Bayesian Nash equilibria $\Q$, of the ratio
between the social cost in $\Q$ and the optimum social cost of the corresponding congestion game
$\Gamma_\Psi$. Thus,
\begin{align*}
\pob(\tsl) & = \sup_{\Psi\in\games(\tsl), \Q} \frac{\SC(\Psi,\Q)}{\OPT(\Gamma_\Psi)}.
\end{align*}
Observe that for $\tsl=0$, the Price of Byzantine Anarchy $\pob(0)$ reduces to the 
Price of Anarchy $\poa$ as defined in Section~\ref{s:cg}.

Again similarly to \cite{MSW06}, we define the {\em Price of Malice} by
$$
   \pom(\tsl)=\frac{\pob(\tsl)}{\pob(0)}.
$$

\section{Existence and Complexity of pure Bayesian Nash equilibria}
\label{s:complexity}

In this section, we study the complexity of deciding whether a given malicious Bayesian congestion game
possesses a pure Bayesian Nash equilibrium or not. 

\begin{theorem}
\label{t:npcscg}
The problem of deciding whether a malicious Bayesian congestion 
game with linear latency functions possesses a pure Bayesian Nash 
equilibrium is \NP-complete, 
even if all strategy sets consist of singletons and either of the following
properties holds:
\begin{enumerate}
\item[(a)]
All players are malicious with the same probability $p$ for any $0<p<1$.
\item[(b)]
Only one player is malicious with positive probability $p$ for any $0<p\le1$.
\end{enumerate} 
\end{theorem}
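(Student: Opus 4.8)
The plan is to prove the two directions separately: membership in \NP, and then \NP-hardness through a reduction from a standard \NP-complete problem.

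For membership, observe that a pure strategy profile $\vsi$ is a certificate of polynomial size, and I can verify in polynomial time that it is a pure Bayesian Nash equilibrium. For every selfish type-agent $u^s$ I compute $\pc_u(\p,\vsi)$ and, since $S_u$ is an explicitly listed collection of singletons, compare it against the private cost of each alternative singleton in $S_u$ (each such cost being a single latency evaluation using the loads $\lsel_e(\vsi_{-u})$ and $\lmal_e(\vsi_{-u})$). For every malicious type-agent $u^m$ I recompute the social cost $\SC(\Psi,\vsi)$ for each of the finitely many alternative resources in $S_u$ and check that none strictly increases it. If no selfish agent can lower its private cost and no malicious agent can raise the social cost, the profile is a Bayesian Nash equilibrium, so the problem lies in \NP.

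For hardness I would reduce from a standard \NP-complete satisfiability/constraint problem, encoding its structure through the necessarily asymmetric singleton strategy sets; this asymmetry is essential, since the symmetric case admits the polynomial-time characterization of \tref{t:characterization}, so no hardness can come from symmetric instances. The central idea is that, once malicious type-agents are present, the game is no longer a potential game: a selfish equilibrium is only an equilibrium relative to where the malicious load sits, while each malicious type-agent, maximizing the \emph{global} social cost, wants to relocate its load onto whichever resource currently hurts the selfish agents most. A pure Bayesian Nash equilibrium is therefore a simultaneous fixed point of the selfish best-response (a pure Nash equilibrium of the congestion game induced by the malicious placement) and the malicious best-response. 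My construction builds ``variable'' gadgets whose selfish equilibrium choice encodes a truth assignment and ``clause'' gadgets in which the malicious load can be profitably relocated precisely when the clause is left unsatisfied, so that a stable joint configuration exists if and only if the formula is satisfiable.

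The same core gadget can handle both restrictions. For part (b), where a single player carries all the malicious weight $p$, I route that one malicious type-agent through a strategy set letting it target any clause gadget, using its single global choice together with the social-cost objective to destabilize every selfish equilibrium that corresponds to a falsified clause. For part (a), where each player is malicious with the common probability $p$, I instead distribute the destabilizing role across many malicious type-agents each carrying selfish weight $(1-p)$ and malicious weight $p$, tuning the linear coefficients $a_e$ so that the aggregate malicious load reproduces the same clause-checking behavior while keeping all type probabilities equal to $p$ (the freedom to pick $p$ anywhere in the allowed range is absorbed into the scaling of the coefficients). The main obstacle I anticipate is controlling the malicious agents' global objective: I must design the gadgets and coefficients so that the marginal effect of relocating a malicious load decomposes cleanly across the independent clause gadgets (so the malicious best-response reads off clause satisfaction), while simultaneously ruling out spurious pure Bayesian Nash equilibria not corresponding to any truth assignment. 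Establishing both completeness (a satisfying assignment yields a genuine fixed point in which every selfish and every malicious type-agent is simultaneously satisfied) and soundness (any pure Bayesian Nash equilibrium forces a consistent, satisfying assignment) under the stringent restrictions of linear latencies, singleton strategies, and a fixed common $p$ is where the delicate quantitative tuning of the latency coefficients will be required.
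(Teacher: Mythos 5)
Your \NP-membership argument is fine: a pure profile is a polynomial-size certificate, and both the selfish condition (compare $\pc_u(\p,\vsi)$ against each alternative singleton) and the malicious condition (compare $\SC(\Psi,\vsi)$ against each unilateral relocation) are checkable in polynomial time. The hardness half, however, is a research plan rather than a proof, and the gap is exactly where you yourself flag it: no gadget is actually constructed, no latency coefficients are fixed, and neither completeness nor soundness is argued. The crux of such a theorem is the construction, and in particular the soundness direction requires showing that when the formula is unsatisfiable \emph{no} pure Bayesian Nash equilibrium exists at all --- i.e.\ every configuration, not just the ``intended'' ones, lies on a best-response cycle. Your proposed mechanism (malicious load profitably relocating onto unsatisfied clause gadgets) makes this especially delicate, because the malicious objective is the \emph{global} social cost: you must prove that its best response decomposes across gadgets, that the induced selfish re-equilibration cannot settle anywhere else, and that no spurious rest point arises; none of this is established, and it is not obvious it can be made to work as described.

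It is worth noting that the paper's proof sidesteps precisely this entanglement. There, clause checking is \emph{not} done by malicious agents at all: all malicious type-agents except one are parked on a dominant high-coefficient resource $e_0$ (harming a captive player $u_0$ is always their best response, so they are frozen), and the single active malicious agent $u_1^m$ participates only in a tiny unconditional ``pursuit'' gadget --- it chases $u_2^s$ over three resources $e_1,e_2,e_3$, which never stabilizes, so $u_2^s$ must escape to sit alone on $e_4$. Whether that escape is compatible with equilibrium is then governed purely by \emph{selfish} congestion: each variable player's selfish agent must occupy one of its two variable resources alone, and clause players must fit onto the unoccupied variable resources, which is possible exactly when the formula (a bounded-occurrence 3-SAT variant, needed to limit how many clause players share a resource) is satisfiable. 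This decoupling --- a local, formula-independent non-existence core plus a SAT encoding carried entirely by selfish players --- is the missing idea that would let you avoid the quantitative tuning your plan defers, and it also handles part (a) for free, since under identical probabilities all malicious agents other than $u_1^m$ are neutralized by the parking resource.
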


\begin{proof}
Our proof uses a reduction from a restricted version of \textsc{3-SAT}.
Here, \textsc{3-SAT} is restricted to instances where
each clause is a disjunction of 2 or 3 variables and each variable occurs 
at most three times. Tovey \cite{Tov84} showed that it is \NP-complete to 
decide the satisfiability of such instances. 
Consider an arbitrary instance of 
\textsc{3-SAT} with set of variables $X=\{x_1, \ldots x_\ell\}$ and
set of clauses $C=\{c_1, \ldots c_k\}$.
Without loss of generality, we may assume that each variable occurs at most twice 
unnegated and at most twice negated.

\begin{figure*}[ht]
\begin{center}
\makeatletter
\resizebox*{0.6\textwidth}{!}{%
\begin{picture}(0,0)%
\includegraphics{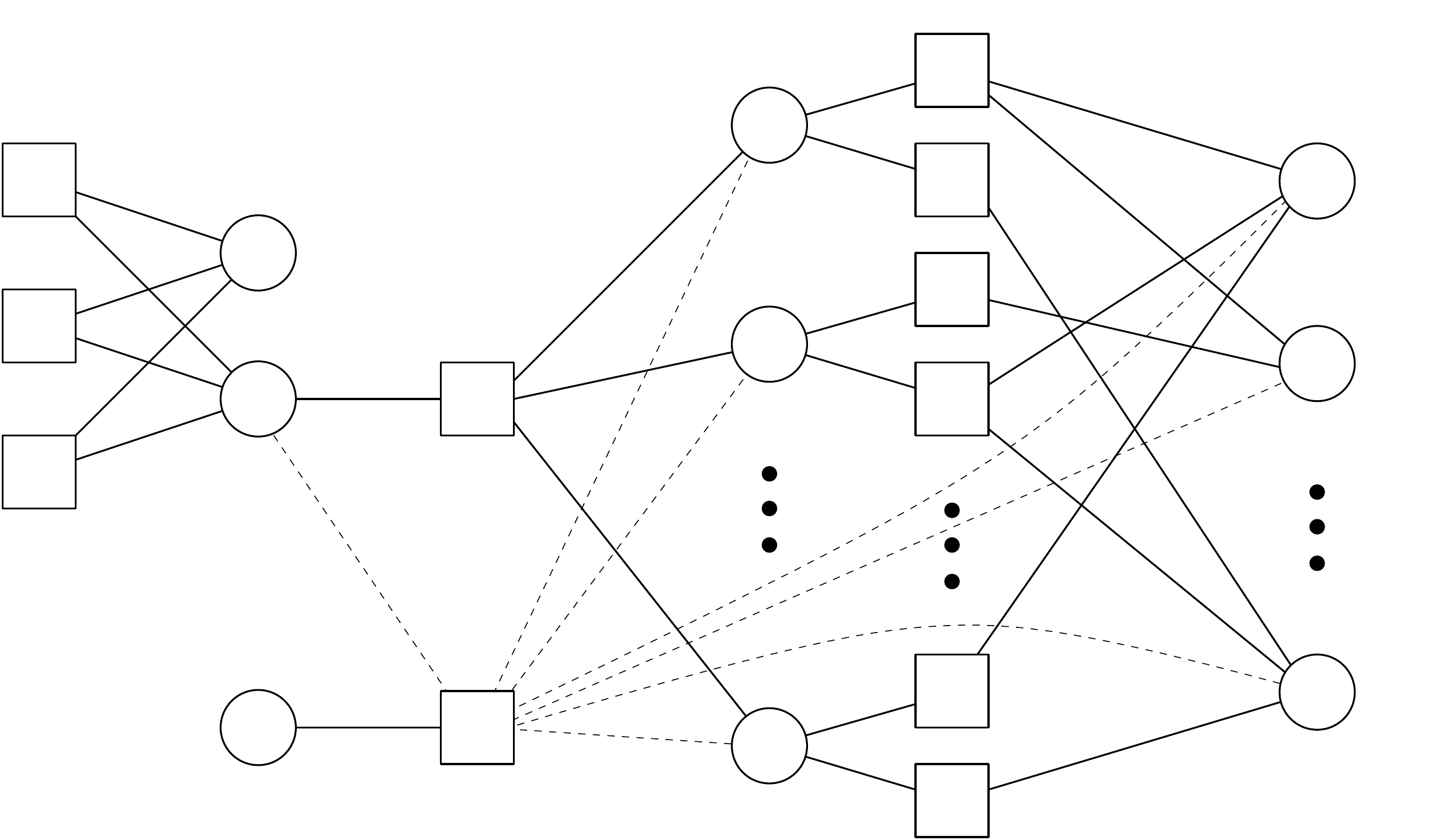}%
\end{picture}%
\setlength{\unitlength}{3947sp}%
\begingroup\makeatletter\ifx\SetFigFont\undefined%
\gdef\SetFigFont#1#2#3#4#5{%
  \reset@font\fontsize{#1}{#2pt}%
  \fontfamily{#3}\fontseries{#4}\fontshape{#5}%
  \selectfont}%
\fi\endgroup%
\begin{picture}(11918,6889)(4479,-6383)
\put(10596,-586){\makebox(0,0)[lb]{\smash{{\SetFigFont{20}{24.0}{\familydefault}{\mddefault}{\updefault}{\color[rgb]{0,0,0}$u_{x_1}$}%
}}}}
\put(10596,-2386){\makebox(0,0)[lb]{\smash{{\SetFigFont{20}{24.0}{\familydefault}{\mddefault}{\updefault}{\color[rgb]{0,0,0}$u_{x_2}$}%
}}}}
\put(10596,-5686){\makebox(0,0)[lb]{\smash{{\SetFigFont{20}{24.0}{\familydefault}{\mddefault}{\updefault}{\color[rgb]{0,0,0}$u_{x_\ell}$}%
}}}}
\put(12106,-136){\makebox(0,0)[lb]{\smash{{\SetFigFont{20}{24.0}{\familydefault}{\mddefault}{\updefault}{\color[rgb]{0,0,0}$e_{x_1}^0$}%
}}}}
\put(12106,-1036){\makebox(0,0)[lb]{\smash{{\SetFigFont{20}{24.0}{\familydefault}{\mddefault}{\updefault}{\color[rgb]{0,0,0}$e_{x_1}^1$}%
}}}}
\put(12106,-1936){\makebox(0,0)[lb]{\smash{{\SetFigFont{20}{24.0}{\familydefault}{\mddefault}{\updefault}{\color[rgb]{0,0,0}$e_{x_2}^0$}%
}}}}
\put(12106,-2836){\makebox(0,0)[lb]{\smash{{\SetFigFont{20}{24.0}{\familydefault}{\mddefault}{\updefault}{\color[rgb]{0,0,0}$e_{x_2}^1$}%
}}}}
\put(12106,-5236){\makebox(0,0)[lb]{\smash{{\SetFigFont{20}{24.0}{\familydefault}{\mddefault}{\updefault}{\color[rgb]{0,0,0}$e_{x_\ell}^0$}%
}}}}
\put(12106,-6136){\makebox(0,0)[lb]{\smash{{\SetFigFont{20}{24.0}{\familydefault}{\mddefault}{\updefault}{\color[rgb]{0,0,0}$e_{x_\ell}^1$}%
}}}}
\put(8251,-5536){\makebox(0,0)[lb]{\smash{{\SetFigFont{20}{24.0}{\familydefault}{\mddefault}{\updefault}{\color[rgb]{0,0,0}$e_0$}%
}}}}
\put(8251,-2836){\makebox(0,0)[lb]{\smash{{\SetFigFont{20}{24.0}{\familydefault}{\mddefault}{\updefault}{\color[rgb]{0,0,0}$e_4$}%
}}}}
\put(6451,-2836){\makebox(0,0)[lb]{\smash{{\SetFigFont{20}{24.0}{\familydefault}{\mddefault}{\updefault}{\color[rgb]{0,0,0}$u_2$}%
}}}}
\put(6451,-5536){\makebox(0,0)[lb]{\smash{{\SetFigFont{20}{24.0}{\familydefault}{\mddefault}{\updefault}{\color[rgb]{0,0,0}$u_0$}%
}}}}
\put(15131,-1036){\makebox(0,0)[lb]{\smash{{\SetFigFont{20}{24.0}{\familydefault}{\mddefault}{\updefault}{\color[rgb]{0,0,0}$u_{c_1}$}%
}}}}
\put(15131,-2536){\makebox(0,0)[lb]{\smash{{\SetFigFont{20}{24.0}{\familydefault}{\mddefault}{\updefault}{\color[rgb]{0,0,0}$u_{c_2}$}%
}}}}
\put(15131,-5236){\makebox(0,0)[lb]{\smash{{\SetFigFont{20}{24.0}{\familydefault}{\mddefault}{\updefault}{\color[rgb]{0,0,0}$u_{c_k}$}%
}}}}
\put(8401,-2386){\makebox(0,0)[b]{\smash{{\SetFigFont{14}{16.8}{\familydefault}{\mddefault}{\updefault}{\color[rgb]{0,0,0}$1$}%
}}}}
\put(12301,314){\makebox(0,0)[b]{\smash{{\SetFigFont{14}{16.8}{\familydefault}{\mddefault}{\updefault}{\color[rgb]{0,0,0}$\beta$}%
}}}}
\put(8401,-5086){\makebox(0,0)[b]{\smash{{\SetFigFont{14}{16.8}{\familydefault}{\mddefault}{\updefault}{\color[rgb]{0,0,0}$M$}%
}}}}
\put(6451,-1636){\makebox(0,0)[lb]{\smash{{\SetFigFont{20}{24.0}{\familydefault}{\mddefault}{\updefault}{\color[rgb]{0,0,0}$u_1$}%
}}}}
\put(4651,-2236){\makebox(0,0)[lb]{\smash{{\SetFigFont{20}{24.0}{\familydefault}{\mddefault}{\updefault}{\color[rgb]{0,0,0}$e_2$}%
}}}}
\put(4651,-1036){\makebox(0,0)[lb]{\smash{{\SetFigFont{20}{24.0}{\familydefault}{\mddefault}{\updefault}{\color[rgb]{0,0,0}$e_1$}%
}}}}
\put(4801,-586){\makebox(0,0)[b]{\smash{{\SetFigFont{14}{16.8}{\familydefault}{\mddefault}{\updefault}{\color[rgb]{0,0,0}$1$}%
}}}}
\put(4801,-1786){\makebox(0,0)[b]{\smash{{\SetFigFont{14}{16.8}{\familydefault}{\mddefault}{\updefault}{\color[rgb]{0,0,0}$1$}%
}}}}
\put(4801,-2986){\makebox(0,0)[b]{\smash{{\SetFigFont{14}{16.8}{\familydefault}{\mddefault}{\updefault}{\color[rgb]{0,0,0}$1$}%
}}}}
\put(4651,-3436){\makebox(0,0)[lb]{\smash{{\SetFigFont{20}{24.0}{\familydefault}{\mddefault}{\updefault}{\color[rgb]{0,0,0}$e_3$}%
}}}}
\end{picture}%
}
\makeatother
\end{center}
\caption{Construction for the proof of Theorem~\ref{t:npcscg}}
\label{f:construct}
\end{figure*}

\noindent
\underline{Part (a)}:~
We will construct a malicious Bayesian congestion game with singleton strategy
sets and identical type probability $p$.
Our construction imposes 
one player $u_c$ for each clause $c\in C$, 
one player $u_x$ and two resources $e_x^0,e_x^1$ for each variable $x \in X$, 
3 additional players $u_0,u_1,u_2$, 
and 5 additional resources $e_0,e_1,e_2,e_3,e_4$.
Our construction is summarized in Figure~\ref{f:construct}. Resources
are depicted as squares and players as circles and an edge (solid or dotted) 
between a resource $e$ and a player $u$ indicates that $\{e\}$ is 
in $u$'s strategy set. 
A number $\alpha$ above a resource $e$ defines 
the slope of the corresponding linear
latency function $f_e(\delta)= \alpha \cdot \delta$.
Denote $ E_v=\{e_{x_1}^0, e_{x_1}^1, \ldots , e_{x_\ell}^0, e_{x_\ell}^1\}$.
For the proof of part (a), let $\beta=2-p$. So,
all resources $e\in E_v$
share the latency function $f_e(\delta)=(2-p)\cdot \delta$.

Player $u_0$ can only be assigned to $e_0$. Both $u_0$ and $e_0$ are used
to collect the malicious type-agents of all players except player $u_1$. Thus all
those players have $e_0$ in their strategy set and $M$ is chosen sufficiently large, such that 
for all those malicious type-agents $e_0$ is a dominant strategy and no selfish type other than
$u_0^s$ will ever prefer to choose $e_0$. Choosing $M=\ell+1$ suffices. Player $u_1$ and $u_2$
are connected to $e_1$, $e_2$, and $e_3$, while $u_2$ can also choose $e_0$ and $e_4$.
For each variable $x\in X$, the corresponding {\em variable  player}
$u_x$ is connected to $e_0$, $e_4$, $e_x^0$ and $e_x^1$. Assigning the selfish
type-agent  
$u^s_x$ to  $e_x^0$ (resp. $e_x^1$) will be interpreted as setting $x$ to {\sf true}
(resp. {\sf false}).
For each clause $c\in C$, the corresponding {\em clause player}
 $u_c$ is connected
to $e_0$ and to all resources $e_x^0$ ($e_x^1$) with $x\in X$ and $x$ appears
{\em negated} ({\em unnegated}) in $c$. For the example in 
Figure~\ref{f:construct}, 
$c_1=(\xn_1 \vee x_2 \vee \xn_\ell)$, 
$c_2=(\xn_1 \vee \xn_2)$, and
$c_k=(x_1   \vee x_2 \vee x_\ell)$.
Observe that by the structure of our 3-SAT instance, no more than 
two clause players are connected to each resource in $E_v$.
This finishes the construction of the malicious Bayesian congestion game.

We will first show that if the 3-SAT instance is satisfiable then the corresponding Bayesian congestion
game possesses a pure Bayesian Nash equilibrium.
Given a satisfying truth assignment, we define a strategy profile 
$\vsi$ of the malicious Bayesian congestion game as follows: 
\begin{itemize}
\item
Both type-agents of player $u_0$ can only be assigned to $e_0$. 
\item
All malicious type-agents except $u^m_1$ are assigned to
resource $e_0$. By the choice of $M$, none of those malicious type-agents can improve.  
\item
Both type-agents of player $u_1$ are assigned to $e_1$ and no type-agent of any player 
is assigned to $e_2$ or $e_3$.
It is easy to see that neither $u^m_1$ nor $u^s_1$ have an incentive to switch.
\item 
Type agent $u^s_2$ is the only type-agent assigned to $e_4$. 
So,  $u^s_2$ cannot improve.
\item 
For each $x\in X$, the selfish type-agent $u^s_x$ of variable player $u_x$ is assigned to 
resource $e_x^0$ if $x={\sf true}$ in the satisfying truth assignment, and to $e_x^1$ otherwise. 
Each of these selfish type-agents is the only type-agent assigned to her resource. 
So, they all experience an 
expected  latency of $\beta=2-p$ and changing to $e_4$ would yield 
the same expected latency.
Thus, the selfish type-agents of all variable players are satisfied.
\item 
Denote by $E'_v$ the subset of resources from $E_v$ to which no selfish type-agent of a  variable player
is assigned. 
Since we have a satisfying truth assignment, each clause player is connected
to some resource from $E'_v$. 
For each $c \in C$ , the selfish type-agent $u^s_c$ is assigned to some resource in $E'_v$ as follows:
\\
Consider the sub-game that consists only of the selfish type-agents 
of the clause players $u_c$, $c\in C$ and
the set of resources $E'_v$. Observe that this sub-game is a (non-malicious) 
congestion game and thus admits 
a pure Nash equilibrium \cite{Ros73a}. 
Assign the selfish type-agents of each clause player according to this Nash
equilibrium. So, none of these selfish type-agents can improve by changing to some other resource in 
$E'_v$. Moreover, at most two selfish type-agents are assigned to each resource in $E'_v$ and 
there is exactly 
one selfish type-agent of a variable player assigned to each 
resource in $E_v\setminus E'_v$. Thus, the
selfish type-agents of all clause players are satisfied.
\end{itemize}
Since no type-agent can improve be changing her strategy, it follows that $\vsi$ in a pure Bayesian Nash
equilibrium.

For the other direction observe that any pure Bayesian Nash equilibrium $\vsi$ fulfills the following structural properties:
\begin{enumerate}
\item[(I)]
All malicious type-agents except $u^m_1$ are assigned to
resource $e_0$ and $u^s_0$ is the only selfish type-agent assigned to $e_0$. 
\item[(II)]
The selfish type-agent $u^s_2$ is assigned to $e_4$ and no other type-agent is assigned to $e_4$.
\end{enumerate}
Property (I) follows immediately by the choice of $M$. We will now prove property (II).

By way of contradiction assume that $u^s_2$ is assigned to a resource in 
$\{e_1,e_2,e_3\}$ in a pure Bayesian Nash equilibrium $\vsi$. In this case $u^m_1$
will always choose the same resource  as $u^s_2$. 
However, then there must be an empty resource in $\{e_1,e_2,e_3\}$ and $u^s_2$
can improve by choosing this empty resource.
This contradicts our assumption that $\vsi$ is a pure Bayesian Nash 
equilibrium. Thus, 
$u^s_2$ is assigned to $e_4$. If some other type-agent is also assigned 
to $e_4$, then $u^s_2$ experiences an expected latency of at least $2-p$
and $u^s_2$ could decrease her expected latency to $1$ by switching to the 
empty resource in $\{e_1,e_2,e_3\}$. Again a contradiction to $\vsi$ being
a pure Bayesian Nash equilibrium. 
It follows that $u^s_2$ is the only type-agent assigned to $e_4$ in $\vsi$.
This completes the proof of property (II).

Since $u^s_2$ is the only type-agent assigned to $e_4$ it follows that for each variable $x\in X$ the 
corresponding selfish type-agent $u^s_x$ is either assigned to $e_x^0$ or to $e_x^1$.
If $u^s_x$ is not the only type-agent on that resource then her expected latency is at least $(2-p)^2$
while changing to $e_4$ would improve her expected latency to $2-p$,  
a contradiction to $\vsi$  being a pure Bayesian Nash equilibrium. 
It follows that 
the selfish type-agents of all clause players are only assigned to resources in $E_v$ to which no 
selfish type-agent of a variable player is assigned.  This is only possible if the strategies of 
the selfish type-agents $u^s_x, x\in X$ correspond to a satisfying truth assignment.
This finishes the proof of part (a).

\noindent
\underline{Part (b)}:~
To see that (b) holds we alter the construction depicted in 
Figure~\ref{f:construct} slightly
by deleting player $u_0$ and resource $e_0$. 
Furthermore, in the new construction player $u_1$ is the only 
player that is malicious with positive probability $p$. 
For the slope of the latency functions of resources in $E_v$, let $\beta=\frac{3}{2}$ 
(in fact any $1<\beta<2$ would also do).
The rest of the construction does not change. The proof now follows
the same line of arguments as in part (a) with only minor changes.
\qed
\end{proof}

\begin{theorem}
\label{t:npcscg_ex}
The results from Theorem~\ref{t:npcscg} hold, 
even if $|S_u|\le 4$ for all players $u\in \N$ and for each resource $e\in E$ there are at most 
three players $u\in\N$ with $\{e\} \in S_u$.
\end{theorem}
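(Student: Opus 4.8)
The plan is to keep the reduction from restricted \textsc{3-SAT} used for \tref{t:npcscg} and to replace, by a local \emph{degree-reduction gadget}, every resource that is shared by non-constantly many players and every player with more than four strategies, and then re-run the same equilibrium analysis. Inspecting the construction of Figure~\ref{f:construct}, exactly three features violate the new bounds: (i) the escape resource $e_4$ lies in the strategy set of $u_2$ and of \emph{every} variable player, so its degree is $\ell+1$; (ii) in part~(a) the collector $e_0$ lies in almost every strategy set; and (iii) in part~(a) the player $u_2$ has the five strategies $e_0,e_1,e_2,e_3,e_4$. Everything else already complies: each resource of $E_v$ carries its variable player and, by Tovey's restriction, at most two clause players, and each clause player is joined to at most three resources of $E_v$.

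The key new ingredient is an \emph{escape chain} that distributes the role of $e_4$. I index the variables $x_1,\dots,x_\ell$, add resources $h_0,h_1,\dots,h_\ell$ with slopes $1=s_0<s_1<\dots<s_\ell<\beta$, and add \emph{mover} players $c_0,\dots,c_\ell$, where $c_0$ takes over the former role of $u_2$ (joined to the chasing resources and to $h_0$), each $c_i$ with $i\ge1$ is joined only to $\{h_{i-1},h_i\}$, and the variable player $u_{x_i}$ uses $h_i$ (in place of $e_4$) as its escape. Then each $h_i$ is used only by $c_i$, $c_{i+1}$ and $u_{x_i}$, so every new resource has degree at most three and every mover has at most three strategies. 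The slopes are tuned so that the following three properties hold in \emph{every} pure Bayesian Nash equilibrium. First, a \emph{forward cascade}: from the anchor fact that $c_0^s$ is alone on $h_0$ one shows by induction that $c_i^s$ is alone on $h_i$, since otherwise $c_{i}^s$ would sit on $h_{i-1}$, forcing $c_{i-1}^s$ to a load of at least two, whence $c_{i-1}^s$ strictly improves by retreating to the cheaper $h_{i-2}$ (and $c_0^s$ by fleeing to an empty chasing resource). Second, an \emph{escape deviation}: because $s_i<\beta$, a variable player that \emph{shares} a resource of $E_v$ strictly improves by moving onto $h_i$, so variable players are forced to be alone on $E_v$. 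Third, a \emph{backward push}: because the slopes \emph{increase}, a variable player that tries to sit on $h_i$ raises $c_i^s$ to a shared latency on $h_i$, after which $c_i^s$ strictly prefers the cheaper shared $h_{i-1}$ and deviates; hence no variable player can ``step aside'' onto the chain, which is precisely the property that the single exclusive escape $e_4$ enforced and that excludes spurious equilibria.

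For part~(b) this already suffices: only $u_1$ is malicious and the movers are selfish, so $c_0$ has exactly the four strategies $e_1,e_2,e_3,h_0$, all bounds hold, and the reduction goes through with $e_4$ replaced by the chain. For part~(a), where every player is malicious, I additionally replace $e_0$ by \emph{private sinks}: for each malicious player $u$ I add a resource $e_0^u$ of huge slope $M$ together with a single-strategy selfish \emph{absorber} $z_u$ permanently on $e_0^u$; for $M$ large enough, disturbing $z_u$ is the most damaging option, so $u^m$ parks on $e_0^u$, each sink has degree two, and each player gains one extra strategy. This keeps the clause and variable players at four strategies but would give $c_0$ the five strategies $e_1,e_2,e_3,h_0,e_0^{c_0}$. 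To repair this I shrink the anchor: I park $u_1^s$ on a private zero-latency rest resource $\rho$, so that only the single agent $u_1^m$ ever chases; then two chasing resources $e_1,e_2$ already guarantee an empty resource for $c_0^s$ to flee to, and $c_0$ needs only $e_1,e_2,h_0,e_0^{c_0}$.

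With these gadgets the analysis of \tref{t:npcscg} is repeated almost verbatim: the anchor forces $c_0^s$ onto $h_0$, the cascade puts each $c_i^s$ alone on $h_i$ and keeps every variable player off the chain, the escape deviation forces each variable player alone onto one of $e_x^0,e_x^1$ (a truth assignment), and the clause players fit on the still-free resources of $E_v$ if and only if that assignment satisfies all clauses; hence a pure Bayesian Nash equilibrium exists iff the instance is satisfiable, and the reduction is polynomial and respects $|S_u|\le4$ and degree at most three. I expect the main obstacle to be exactly the rigidity of the chain: one must verify, in every equilibrium simultaneously, that the forward cascade is forced and that any step-aside produces a strictly improving \emph{backward} deviation, so that the chain faithfully emulates the single exclusive escape without creating new equilibria. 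The slope ordering $s_0=1<s_1<\dots<s_\ell<\beta$ is what reconciles these opposite-looking demands (a mover that is \emph{alone} prefers its own $h_i$, whereas a mover that is \emph{crowded} prefers the cheaper $h_{i-1}$), and checking this together with the parking of all malicious agents in part~(a) is the delicate point.
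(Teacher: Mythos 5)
Your proposal is correct and shares the paper's high-level plan: keep the reduction of \tref{t:npcscg}, replace the one high-degree escape resource (the resource shared by $u_2$ and all variable players, which the paper's sketch refers to as $e_3$) by a polynomial-size gadget of relay players with monotone slopes, and break the malicious collector $e_0$ into pieces of degree at most three. The executions differ, and the differences are instructive. The paper replaces the escape by a \emph{binary tree} of depth $\lceil\log \ell\rceil$ with one ``tree player'' per edge, slope $\beta^{j}$ at level $j$, $u_2$ at the root and $u_x$ at leaf $e_{3,x}$, and it raises the $E_v$ slopes to the leaf slope $\beta^{\lceil\log \ell\rceil}$; for part (a) it clones $e_0$ together with copies of $u_0$. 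You instead use a \emph{path} $h_0,\dots,h_\ell$ with strictly increasing slopes $1=s_0<\dots<s_\ell<\beta$, private sinks with absorber players, and the $\rho$-trick that parks $u_1^s$ so that two chasing resources suffice. Your calibration buys something real: since your escape slopes stay \emph{strictly} below the unchanged $E_v$ slope $\beta$, a variable player sharing an $E_v$ resource improves strictly by escaping, exactly as in \tref{t:npcscg}; under the paper's choice (leaf slope equal to the new $E_v$ slope) that deviation is only weakly improving, so ruling out profiles in which a variable player shares both of whose resources are occupied no longer follows as written, and one must raise the $E_v$ slopes by one further factor of $\beta$ to restore strictness. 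Likewise, your observation that the root player would otherwise retain five strategies in part (a), and your repair of it, addresses a bound that the paper's sketch glosses over. One refinement on your side: the forward cascade as you state it (a crowded $c_{i-1}^s$ retreats to $h_{i-2}$) is not quite well-founded, because $h_{i-2}$ is not yet known to be uncrowded at that point; the clean formulation is a single induction on $i$ proving simultaneously that ``$c_i^s$ sits on $h_i$ and no other selfish agent sits on $h_i$,'' where the step's deviation always targets $h_{i-1}$, whose occupancy is pinned by the previous step --- this is precisely the ``simultaneous'' verification you flag as the delicate point, and it goes through with your slope ordering.
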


\begin{proof} [Sketch]
We will slightly alter the construction from Figure~\ref{f:construct}. First observe that we have 
already $|S_u|\le 4$ for all players $u\in \N$. Furthermore, the only resources that are in the strategy
set of more than three players are $e_3$ and for part (a) also $e_0$.

\begin{figure}[ht]
\begin{center}
\makeatletter
\resizebox*{0.4\textwidth}{!}{%
\begin{picture}(0,0)%
\includegraphics{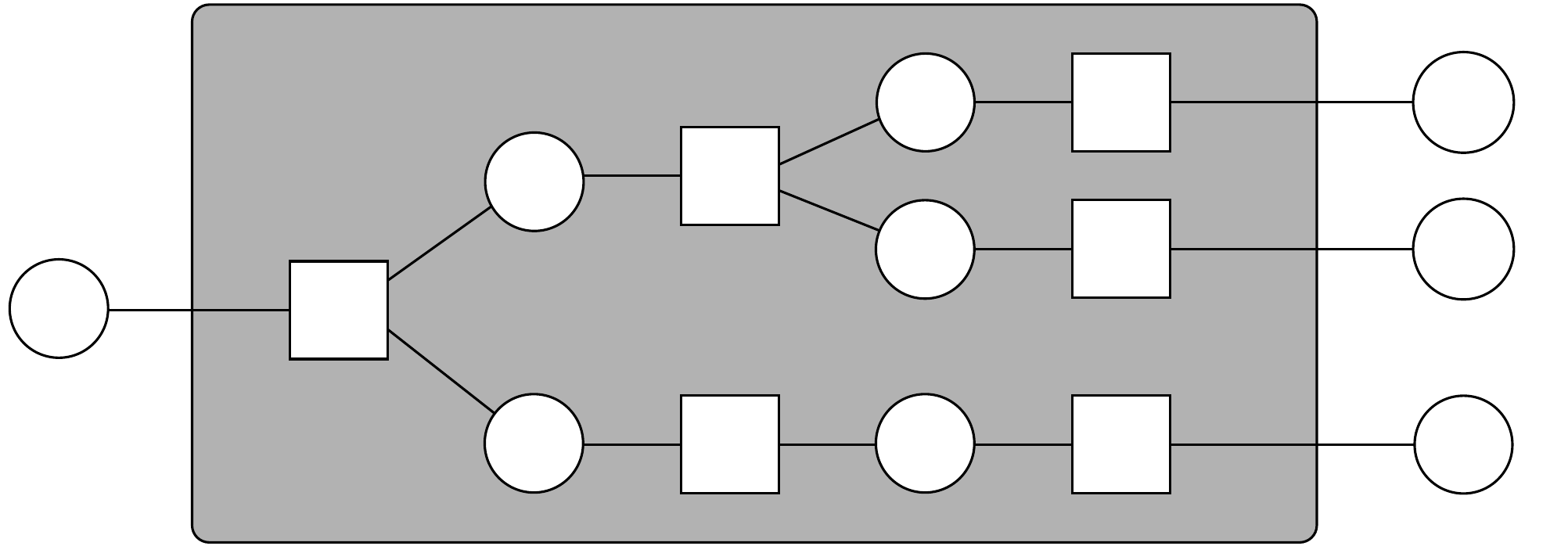}%
\end{picture}%
\setlength{\unitlength}{3947sp}%
\begingroup\makeatletter\ifx\SetFigFont\undefined%
\gdef\SetFigFont#1#2#3#4#5{%
  \reset@font\fontsize{#1}{#2pt}%
  \fontfamily{#3}\fontseries{#4}\fontshape{#5}%
  \selectfont}%
\fi\endgroup%
\begin{picture}(9609,3344)(622,-3683)
\put(2666,-2311){\makebox(0,0)[b]{\smash{{\SetFigFont{20}{24.0}{\familydefault}{\mddefault}{\updefault}{\color[rgb]{0,0,0}$e_{3,0}$}%
}}}}
\put(7511,-1036){\makebox(0,0)[b]{\smash{{\SetFigFont{20}{24.0}{\familydefault}{\mddefault}{\updefault}{\color[rgb]{0,0,0}$e_{3,x_1}$}%
}}}}
\put(7511,-1936){\makebox(0,0)[b]{\smash{{\SetFigFont{20}{24.0}{\familydefault}{\mddefault}{\updefault}{\color[rgb]{0,0,0}$e_{3,x_2}$}%
}}}}
\put(7511,-3136){\makebox(0,0)[b]{\smash{{\SetFigFont{20}{24.0}{\familydefault}{\mddefault}{\updefault}{\color[rgb]{0,0,0}$e_{3,x_3}$}%
}}}}
\put(2701,-1861){\makebox(0,0)[b]{\smash{{\SetFigFont{14}{16.8}{\familydefault}{\mddefault}{\updefault}{\color[rgb]{0,0,0}$1$}%
}}}}
\put(5126,-1036){\makebox(0,0)[b]{\smash{{\SetFigFont{14}{16.8}{\familydefault}{\mddefault}{\updefault}{\color[rgb]{0,0,0}$\beta$}%
}}}}
\put(7576,-586){\makebox(0,0)[b]{\smash{{\SetFigFont{14}{16.8}{\familydefault}{\mddefault}{\updefault}{\color[rgb]{0,0,0}$\beta^2$}%
}}}}
\put(976,-2311){\makebox(0,0)[b]{\smash{{\SetFigFont{20}{24.0}{\familydefault}{\mddefault}{\updefault}{\color[rgb]{0,0,0}$u_2$}%
}}}}
\put(9601,-1036){\makebox(0,0)[b]{\smash{{\SetFigFont{20}{24.0}{\familydefault}{\mddefault}{\updefault}{\color[rgb]{0,0,0}$u_{x_1}$}%
}}}}
\put(9601,-1936){\makebox(0,0)[b]{\smash{{\SetFigFont{20}{24.0}{\familydefault}{\mddefault}{\updefault}{\color[rgb]{0,0,0}$u_{x_2}$}%
}}}}
\put(9601,-3136){\makebox(0,0)[b]{\smash{{\SetFigFont{20}{24.0}{\familydefault}{\mddefault}{\updefault}{\color[rgb]{0,0,0}$u_{x_3}$}%
}}}}
\end{picture}%
}
\makeatother
\end{center}
\caption{Tree for $\ell=3$}
\label{f:tree}
\end{figure}

To resolve this for $e_3$, disconnect all players
from $e_3$ and replace the single resource $e_3$ with a binary tree of resources with root $e_{3,0}$ that has $\ell$ leaves $e_{3,x_1}, \ldots e_{3,x_\ell}$, all with depth $\lceil \log(\ell)\rceil$. For a resource $e$ 
at level $j$ the latency function is defined by $f_e(\delta)=\beta^{j} \cdot \delta$. 
So $f_{e_{3,0}}(\delta)= 1$  and $f_{e_{3,x}}(\delta)= \beta^{\lceil\log(\ell)\rceil} \cdot \delta$ 
for all leaves $x\in X$.
For each pair of resources from two consecutive levels, we introduce a new player to connect them. Call those players \emph{tree players}. Figure~\ref{f:tree} shows the construction for $\ell =3$.
Player $u_2$ is connected to resource $e_{3,0}$ and each variable player $x\in X$ is connected to
$e_{3,x}$. We also change the latency function of all resources $e\in E_v$ (cf. Theorem~\ref{t:npcscg}) 
to $f_e(\delta)=\beta^{\lceil\log(\ell)\rceil}\cdot \delta$.

Moreover, for part (a) we have to resolve that more than three players are connected to $e_0$.
To do so, we simply copy resource $e_0$
together with player $u_0$ multiple times and connect all players (including the tree players) except 
player $u_1$ to the new set of resources that evolve from $e_0$. By having sufficiently many copies, this
can be done, such that no more than three players are connected to each new resource. Again, $M$ is 
chosen sufficiently large, e.g. $M=2^{\lceil\log(\ell)\rceil +1}$.   

Observe that $u_2^s$ will only selfishly choose $e_{3,0}$ if all
tree players choose the strategy that is closer to the leaves. 
The rest of the proof now simply follows the proof of Theorem~\ref{t:npcscg}.
\qed
\end{proof}

For the more restricted class of symmetric malicious Bayesian congestion game with singleton 
strategy sets, identical type probability $p$ and identical latency functions we can easily decide
whether a pure Bayesian Nash equilibrium exists or not.

\begin{theorem}
\label{t:characterization}
A symmetric malicious Bayesian congestion game with singleton strategy sets, identical 
type probability $p$ and identical (not necessarily linear) latency 
functions possesses a pure Bayesian Nash equilibrium if and only if either of the following 
properties holds:
\begin{enumerate}
\item[(a)]
$p\le \frac{1}{2}$ and $r=2$
\item[(b)]
$p\le \frac{1}{2}$ and $r | n$
\end{enumerate}
\end{theorem}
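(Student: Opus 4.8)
The plan is to translate everything into resource occupation numbers. For a pure profile $\vsi$ and a resource $e$, let $s_e$ be the number of selfish type-agents and $m_e$ the number of malicious type-agents on $e$, so the expected load is $L_e=\lsel_e(\vsi)+\lmal_e(\vsi)=(1-p)s_e+p\,m_e$ and $\sum_e L_e=n$. Since strategies are singletons and all latencies equal $f$, a selfish agent $u^s$ on $e$ pays $f(\Lambda_e^{-u}+1)$ with $\Lambda_e^{-u}=L_e-(1-p)\,[\sigma(u^s)=e]-p\,[\sigma(u^m)=e]$; hence $u^s$ is satisfied iff its resource minimizes $\Lambda_e^{-u}$, i.e.\ the loads must be balanced. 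Dually, using that identical $p$ gives $\SC(\Psi,\vsi)=\frac1n\sum_u\pc_u$, all malicious agents share one objective, so $u^m$ on $e$ is satisfied iff $e$ maximizes the disturbance $\sum_{u'\neq u,\ \sigma(u'^s)=e}\big(f(x_{u'})-f(x_{u'}-p)\big)$, where $x_{u'}$ is the load seen by the selfish agent $u'$. I would record these two local conditions first, since the whole proof is a fight between \emph{load balancing} (selfish) and \emph{load concentration} (malicious).

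For sufficiency I would exhibit explicit profiles. When $r\mid n$ I put exactly $n/r$ selfish and $n/r$ malicious agents on every resource and assign the two agents of each player to \emph{different} resources, e.g.\ by the cyclic shift $i\mapsto(\lfloor ir/n\rfloor,\ \lfloor ir/n\rfloor+1 \bmod r)$. Then every $L_e=n/r$, so selfish agents are indifferent and each malicious agent inflicts the same disturbance on every foreign resource; by symmetry this is a pure Bayesian Nash equilibrium for any non-decreasing $f$. The only role of $p\le\frac12$ is the deviation in which $u^s$ joins its own malicious agent, changing its experienced load from $n/r+p$ to $n/r+1-p$, which fails to improve exactly when $p\le\frac12$. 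For $r=2$ I would instead use the near-balanced profile $s=(\lceil n/2\rceil,\lfloor n/2\rfloor)$ with every player's malicious agent placed opposite its selfish agent; here the agents facing the larger selfish crowd strictly prefer to stay, while the agents away from it have matching experienced-load multisets and are therefore indifferent. I expect verifying this multiset-matching for $r=2$ (so that every malicious agent is content and the single load gap stays inside the selfish tolerance $1-2p$) to be the fiddliest part of sufficiency.

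For necessity I would assume an equilibrium and extract the stated constraints. To force $p\le\frac12$, I exploit the same hiding deviation: applying the selfish inequality to the resource carrying a player's own malicious agent gives, when $p>\frac12$, a strict ordering $L_{\sigma(u^s)}<L_{\sigma(u^m)}$ for every player whose agents are split, and combining this with the global identity $\sum_e(m_e-s_e)L_e$ and malicious optimality I would drive a contradiction. To exclude $r\ge3$ with $r\nmid n$, I use that $\sum_e s_e=n$ forces the selfish counts to be non-constant, so some maximal-selfish resource carries a malicious agent while another resource has strictly fewer selfish agents; the selfish balance bound $L_e\le L_{e'}+(1-p)$ then clashes with a load gap of at least $1$ unless yet another malicious agent can shift onto the maximal-selfish resource and strictly raise $\SC$.

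The genuine obstacle, I expect, is this last contradiction in the borderline integer cases, where the own-malicious-agent correction terms let the selfish inequality hold with equality and the load gap equals exactly $1$. I would resolve it by an exchange/counting argument showing that whenever both local optimality conditions are simultaneously tight one can still produce either a selfish agent who strictly gains by moving to an underloaded resource or a malicious agent who strictly increases $\SC$ by relocating to the maximal-selfish resource; the point is then that $r=2$ is precisely the regime in which no \emph{third}, underloaded resource exists to break the balance, which is why $r=2$ escapes the obstruction that kills all $r\ge3$ with $r\nmid n$. This is also where the shape of $f$ must be handled with care, so throughout I would keep every comparison in the increment form $f(x+p)-f(x)$ to stay valid for an arbitrary non-decreasing latency function.
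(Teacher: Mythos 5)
Your occupation-number framework and your sufficiency argument for (b) match the paper's and are sound, but your sufficiency argument for (a) contains a step that is false as stated. For even $n$, case (a) is subsumed by case (b), so the issue is precisely odd $n$: in the near-balanced profile (selfish counts $\lceil n/2\rceil,\lfloor n/2\rfloor$, each malicious agent opposite its selfish partner) you claim the malicious agents facing the larger selfish crowd ``strictly prefer to stay'' and the remaining ones are ``indifferent''. That is true for linear $f$, but the theorem permits arbitrary identical non-decreasing latencies, and for non-linear $f$ the claim fails: a malicious agent's disturbance is a sum of increments $f(x+p)-f(x)$, and when increments grow with the load it pays to abandon a small crowd and pile onto a big one. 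Concretely, take $n=3$, $f(x)=x^2$, any $0<p\le\tfrac12$: selfish agents of players $1,3$ on $e_1$, of player $2$ on $e_2$, malicious agents opposite. Moving player $1$'s malicious agent from $e_2$ to $e_1$ changes the sum of private costs by
\begin{align*}
\Bigl[4+(2+p)^2+(1+p)^2\Bigr]-\Bigl[8+(1+2p)^2\Bigr]=2p(1-p)>0,
\end{align*}
so this agent is unsatisfied and the profile is not an equilibrium. The paper shares this weakness (it merely asserts ``it's not hard to see''), and the gap is not repairable by a cleverer profile: checking the two selfish splits $(2,1)$ and $(3,0)$ together with all malicious placements shows this quadratic instance admits no pure Bayesian Nash equilibrium at all, so the odd-$n$ part of (a) genuinely needs linearity. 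Your stated precaution of keeping everything ``in the increment form'' does not rescue the step --- it is exactly the increment form that exposes it.

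The second gap is that your necessity direction stops exactly where the proof has to be done. For $p>\frac12$, the ordering $L_{\sigma(u^s)}<L_{\sigma(u^m)}$ for split players is fine granted strictly increasing $f$ (for constant $f$, which is non-decreasing, every profile is an equilibrium, so necessity silently needs strict monotonicity --- a caveat neither you nor the paper states), but ``combining this with the global identity $\sum_e(m_e-s_e)L_e$'' is not an argument: that expression is not an identity, and you never exhibit the contradiction. The paper instead pins down the complete structure of a hypothetical equilibrium (for $r=2$: selfish counts $\lceil n/2\rceil$ and $\lfloor n/2\rfloor$ with every malicious agent opposite its partner, the odd-$n$ case settled by an explicit load computation; for $r\mid n$: exactly $n/r$ selfish and $n/r$ malicious agents per resource, all players split) and then names one improving deviation, $u^s$ moving onto $\sigma(u^m)$, which gains precisely because $p>\frac12$. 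For $r\ge3$, $r\nmid n$, you explicitly defer the ``borderline integer cases'' to an unspecified exchange/counting argument; that argument is the heart of the matter, and the paper supplies it: the set $E^+$ of resources carrying at least $\lceil n/r\rceil$ selfish agents must be a singleton $\{e_1\}$; every player whose selfish agent is off $e_1$ then parks its malicious agent on $e_1$, so $e_1$ receives at least $(r-1)\lfloor n/r\rfloor$ malicious agents; pigeonhole yields a resource $e'$ with at most $\bigl\lfloor \lceil n/r\rceil/(r-1)\bigr\rfloor$ malicious agents, and since $r\ge3$ and $n>r$ the selfish agents on $e_1$ strictly improve by moving to $e'$ (the regime $n<r$ needs a separate empty-resource argument, which you also omit). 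Until these counting steps are written out, your proposal is a plan rather than a proof.
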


\begin{proof}
Assume that at least one of the properties holds. We will show that in each case this implies the
existence of a pure Bayesian Nash equilibrium. 

First assume that (a) holds:  
For each player $u_i$ assign the selfish type-agent $u_i^s$ alternately to the two resources. At each time, we assign the corresponding malicious type-agent to the other resource. 
It's not hard to see that the resulting strategy profile is a pure Bayesian Nash equilibrium.

Now assume that (b) holds:
In this case assign $\frac{n}{r}$ selfish type-agents and $\frac{n}{r}$ malicious type-agents to each resource such that the selfish and malicious type-agent of each fixed player are not assigned to the same
resource. Again it is easy to see that the resulting strategy profile is a pure Bayesian Nash equilibrium.

For the other direction we will show that if neither (a) nor (b) holds then the malicious Bayesian congestion game does not possess a pure Bayesian Nash equilibrium. By way of contradiction assume there exists 
a  malicious Bayesian congestion game $\Gamma$ satisfying neither (a) nor (b) but $\Gamma$ admits 
a pure Bayesian Nash equilibrium $\vsi$. 
We consider 3 sub-cases:
\medskip

\noindent
\underline{Case 1:} $p>\frac{1}{2}$ and $r=2$\\
By way of contradiction assume $\vsi$ assigns more than  $\lceil\frac{n}{2}\rceil$ selfish type-agents
to some resource $e$. If this is the case then $\vsi$ will also assign all malicious type-agents to $e$.
But then all the selfish type-agents on resource $e$ can improve by switching to the other resource, 
a contradiction to $\vsi$ being a pure Bayesian Nash equilibrium.
It follows that at most $\lceil\frac{n}{2}\rceil$ selfish type-agents are assigned to each resource. This
again implies that  $\lceil\frac{n}{2}\rceil$ selfish type-agents are assigned to one resource (say $e_1$)
and $\lfloor \frac{n}{2} \rfloor$ selfish type-agents are assigned to the other resource (say $e_2$).
Denote $\N_1=\{u\in \N | \sigma(u^s) = e_1\}$ the set of players with selfish type-agent assigned to
resource $e_1$ and denote $\N_2=\N\setminus \N_1$.
For each player $u \in \N_2$ we have $\lsel_{e_1}(\vsi_{-u})> \lsel_{e_2}(\vsi_{-u})$ which implies
that $\vsi(u^m)=e_1$ for all $u\in \N_2$. 
Will now show that $\vsi(u^m)=e_2$ for all $u\in \N_1$. If $n$ is even then this holds immediately by
symmetry. So assume $n$ is odd and $\exists u\in N_1$ with $\vsi(u^m)=e_1$.
Now consider an arbitrary player $u'\in \N_1$ with $u'\neq u$. Since $n$ is odd it follows that $n\ge 3$ and thus such a player exits. Furthermore, it follows that $\ntf=\ntc -1$. Then
\begin{align*}
   \lsel_{e_1}(\vsi_{-u'})+\lmal_{e_1}(\vsi_{-u'})
     &\ge (1-p)\cdot (\ntc -1)
              + p \cdot (\ntf + 1)\\
       &= \ntc -1 + p
\end{align*}
while
\begin{align*}
   \lsel_{e_2}(\vsi_{-u'})+\lmal_{e_2}(\vsi_{-u'})
      &\le (1-p)\cdot \ntf
               + p \cdot (\ntc -2)\\
       & = \ntc -1 - p\\
       & < \lsel_{e_1}(\vsi_{-u'})+\lmal_{e_1}(\vsi_{-u'}),
\end{align*}
a contradiction to $\vsi$ being a pure Bayesian Nash equilibrium.
So, $\vsi(u^m)=e_2$ for all $u\in \N_1$. 
Summing up, for all $u\in N_1$ we have $\sigma(u^s)=e_1$ and $\sigma(u^m)=e_2$ while
for all $u\in N_2$ we have $\sigma(u^s)=e_2$ and $\sigma(u^m)=e_1$. 
In such an assignment each selfish type-agent on resource $e_2$ can improve be switching to
$e_1$. This contradicts our initial assumption that 
$\vsi$ is a pure Bayesian Nash equilibrium.
\medskip

\noindent
\underline{Case 2:} $p>\frac{1}{2}$ and $r | n$\\
First assume by way of contradiction that there exists a resource $e\in E$ to which $\vsi$ assigns 
more than $\frac{n}{r}$ selfish type-agents. It follows that there also exists some other resource $e'$ 
to which $\vsi$ assigns less than $\frac{n}{r}$ selfish type-agents. Since $\vsi$ is a pure Bayesian
Nash equilibrium, no malicious type-agent is assigned to $e'$. However, then all selfish type-agents on $e$ improve
by switching to $e'$, a contradiction. It follows that $\vsi$ assigns 
exactly $\frac{n}{r}$ selfish type-agents to each resource. This also implies that 
$\sigma(u^m)\neq \sigma(u^s)$ for all players $u\in\N$.

If more than $\frac{n}{r}$ malicious type-agents are assigned to some resource $e\in E$ then all selfish
type-agents on $e$ can improve since $p>\frac{1}{2}$. So, $\vsi$ assigns 
exactly $\frac{n}{r}$ malicious type-agents to each resource. 
However, in such a pure strategy profile the selfish type-agent $u^s$ of each player $u$ can improve by switching to $\sigma(u^m)$. This contradicts our initial assumption that 
$\vsi$ is a pure Bayesian Nash equilibrium.
\medskip

\noindent
\underline{Case 3:} $r\ge 3$ and $\frac{n}{r}\not\in \mathbb{N}$\\
First observe that if $n<r$ and the malicious type-agents are satisfied then there is always some resource 
$e\in E$ to which no type-agent is assigned. Each selfish type-agent can then improve by switching to
$e$, a contradiction to $\vsi$ being a pure Bayesian Nash equilibrium.
So we may assume that $n>r$.

Let $E^+$ be the set of resources to which $\vsi$ assigns at least \nmc
selfish type-agents. Since $\frac{n}{r}\not\in \mathbb{N}$ it follows that
$1\le |E^+|\le r-1$. If $|E^+|\ge 2$ then $\vsi$ assigns all malicious type-agents to a resource in $E^+$.
This implies that there exists a selfish type-agents assigned to some resource in $E^+$ that can improve
by switching to some resource in $E\setminus E^+$. It follows that $|E^+|=1$. Without loss of generality 
assume $E^+=\{e_1\}$. 

Since $|E^+|=1$ it follows that $\vsi$ assigns exactly $\nmf=\nmc-1$ selfish type-agents to each resource
$e\in E\setminus E^+$.
Now, for all $u\in N$, if $\sigma(u^s)\in E\setminus E^+$ then $\sigma(u^m)=e_1$. It follows that $\vsi$
assigns at least $(r-1)\cdot \nmf$ malicious type-agents to $e_1$ and (by the pigeon hole principle) 
there exists a resource $e'\in E\setminus E^+$ to which $\vsi$ assigns at most 
$\left\lfloor\frac{\nmc}{r-1}\right\rfloor$ malicious type-agents.  Since $r\ge 3$ and $n> r$ it follows 
that all selfish type-agents on resource $e_1$ can improve by switching to resource $e'$.
This contradicts our initial assumption that $\vsi$ is a pure Bayesian Nash equilibrium.

In each case we got a contradiction to our assumption that $\vsi$ is a pure Bayesian Nash equilibrium, 
proving that $\Gamma$ does not admit a pure Bayesian Nash equilibrium. This finishes the proof of the 
theorem.
\qed
\end{proof}
Observe, that the previous proof is constructive. So, if the requirements for the existence of a
pure Bayesian Nash equilibrium are fulfilled, then this equilibrium can also be easily constructed in 
linear time.

\section{Price of Malice}
\label{s:poa}
We now shift gears and present our results that are related to the Price of Malice.
We start with a general upper bound on the Price of Byzantine Anarchy. 
The proof of this upper bound uses a technique from \cite{CK05} adapted to the model of 
malicious Bayesian congestion games. 
Furthermore, it makes use of the following technical lemma, which has an easy proof.

\begin{lemma}
\label{l:ineq}
For all $x,y\in\mathbb{R}$ and $c>0$ we have 
$x\cdot y\le c\cdot x^2 + \frac{1}{4c} \cdot y^2$
\end{lemma}


\begin{theorem}
\label{t:pob}
Consider the class of malicious Bayesian congestion games $\games(\tsl)$ with affine latency 
functions.
Then,
$$
    \pob(\tsl) \le \frac{n}{n-\tsl} (1-p_{\min}) \left(\tsl + \frac{3+\sqrt{5+4\tsl}}{2} \right).
$$
\end{theorem}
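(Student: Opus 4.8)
The plan is to adapt the smoothness argument of Christodoulou and Koutsoupias \cite{CK05}, in the sharper form used for weighted linear games (cf. \cite{AAE05}), to the malicious Bayesian setting, with \lref{l:ineq} as the key algebraic tool. Fix an instance $\Psi\in\games(\tsl)$ with affine latencies, a Bayesian Nash equilibrium $\Q$, and an optimal assignment $\vect o=(o_u)_{u\in\N}$ of the associated congestion game $\G_\Psi$; write $o_e=|\{u:e\in o_u\}|$ and let $\lsel_e,\lmal_e$ be the expected selfish and malicious loads and $\delta_e=\lsel_e+\lmal_e$ the total expected load at $\Q$. It suffices to argue for pure profiles and extend to mixed $\Q$ by taking expectations, since every relevant quantity is linear in $q(\vsi)$ and \lref{l:ineq} is applied resource-by-resource.

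First I would exploit the equilibrium condition of the selfish type-agents: deviating $u^s$ to its optimal strategy $o_u$ cannot decrease its private cost, and after such a deviation the load on a resource $e\in o_u$ is at most $\delta_e+1$, so $\pc_u(\p,\vsi)\le\sum_{e\in o_u}f_e(\delta_e+1)$. Weighting by $(1-p_u)$, summing over $u\in\N$, and regrouping by resource gives $(n-\tsl)\,\SC(\Psi,\Q)=\sum_u(1-p_u)\,\pc_u\le\sum_e f_e(\delta_e+1)\sum_{u:\,e\in o_u}(1-p_u)$. The estimate $\sum_{u:\,e\in o_u}(1-p_u)\le(1-p_{\min})\,o_e$ is exactly where the factor $(1-p_{\min})$ enters, and expanding $f_e(\delta_e+1)=a_e\delta_e+a_e+b_e$ splits the right-hand side into the cross term $\sum_e a_e o_e\delta_e$ plus lower-order terms that are controlled by $\sum_e(a_e o_e^2+b_e o_e)=n\cdot\OPT(\G_\Psi)$ using $o_e\le o_e^2$ for integer $o_e$.

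Next I would relate the equilibrium's quadratic cost $\sum_e a_e\delta_e^2$ back to the social cost. Its selfish part $\sum_e a_e\lsel_e\delta_e$ is bounded cleanly by $\sum_u(1-p_u)\,\pc_u$, since each selfish agent already pays $f_e(\delta_e)$ on its own resource, whereas the malicious part $\sum_e a_e\lmal_e\delta_e$ must be absorbed separately; here the aggregate constraint $\sum_e\lmal_e=\tsl$ is the only handle and it is the source of the explicit $\tsl$-dependence. Applying \lref{l:ineq} to $\sum_e a_e o_e\delta_e$ (equivalently, Cauchy--Schwarz, i.e. optimizing the free constant $c$) separates it into a multiple of the equilibrium cost and a multiple of $n\cdot\OPT(\G_\Psi)$. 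Collecting everything yields a self-referential inequality of the shape $t^2\le t+1+\tsl$, where $t$ is a normalized square root of the equilibrium cost; its positive root $\tfrac{1+\sqrt{5+4\tsl}}{2}$ is precisely what produces the term $\tfrac{3+\sqrt{5+4\tsl}}{2}$, while dividing social cost by $n-\tsl$ against $\OPT=\tfrac1n\sum_e(a_e o_e^2+b_e o_e)$ supplies the prefactor $\tfrac{n}{n-\tsl}$. A useful consistency check is that at $\tsl=0$ and $p_{\min}=0$ the bound collapses to $\tfrac{3+\sqrt5}{2}=1+\Phi$, the Price of Anarchy of weighted linear congestion games, as it must, since the Bayesian loads behave like fractional weights.

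The hard part will be the malicious term $\sum_e a_e\lmal_e\delta_e$: unlike the selfish agents, the malicious type-agents have no individually bounded ``own cost,'' so one cannot charge their induced congestion resource-by-resource the way the selfish deviation argument does. I expect to control it only through the total mass $\sum_e\lmal_e=\tsl$ together with \lref{l:ineq}, which is what forces the malicious contribution to surface both as the additive summand $\tsl$ and as the shift under the square root in $\sqrt{5+4\tsl}$. This accounting of the malicious loads, rather than the Christodoulou--Koutsoupias machinery itself, is where the real care is needed.
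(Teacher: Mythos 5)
Your skeleton matches the paper's: a Christodoulou--Koutsoupias deviation argument for the selfish type-agents, the factor $(1-p_{\min})$ entering via $\sum_{u:e\in o_u}(1-p_u)\le(1-p_{\min})\,o_e$, \lref{l:ineq} with an optimized constant $c$, and the closing algebra (your quadratic $t^2\le t+1+\tsl$ is consistent with the paper's optimization of $(\tsl+1+\tfrac{1}{4c})/(1-c)$, and the $\tsl=0$ sanity check is right). However, there is a genuine gap at exactly the step you flag as delicate, and the handle you propose does not close it. By keeping the full load $\delta_e=\lsel_e+\lmal_e$ in the deviation bound, you are forced after \lref{l:ineq} to control the quadratic term $\sum_e a_e\delta_e^2$, whose malicious part $\sum_e a_e\lmal_e\delta_e\ \ge\ \sum_e a_e\lmal_e^2$ you propose to bound ``only through the total mass $\sum_e\lmal_e=\tsl$''. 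That aggregate constraint is genuinely insufficient, because the malicious mass may sit on resources of arbitrarily large slope. Concretely: take a player $u$ with $p_u=p$ and $S_u=\{\{e\},\{e'\}\}$, where $f_e(\delta)=\delta$ and $f_{e'}(\delta)=K\delta$, plus one purely selfish player confined to a third resource. In equilibrium $u^s$ plays $\{e\}$, while $u^m$ is completely indifferent --- it cannot touch the other player's resource, and $\pc_u(\p,\vsi)$ does not depend on $\sigma(u^m)$ --- so $u^m$ on $\{e'\}$ is a Bayesian Nash equilibrium. There $\sum_e a_e\lmal_e\delta_e\ge Kp^2$ is unbounded while $\SC$, $\OPT(\G_\Psi)$, $n$ and $\tsl$ all stay fixed; note you cannot invoke the malicious agents' equilibrium condition to rule this out, since they are in equilibrium here. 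Even if you restrict the quadratic sum to the support of the optimum (which the resource-by-resource use of \lref{l:ineq} permits), bounding $\lmal_e\le\tsl$ pointwise there still leaves a term of order $\tsl^2\cdot n\,\OPT(\G_\Psi)$, which is too weak for the claimed linear-in-$\tsl$ bound.

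The paper's proof avoids the problem by never letting the malicious load reach any quadratic term. The pointwise bound $\lmal_e(\vsi_{-u})\le\tsl$ is applied \emph{inside the deviation cost at the very first step}, giving $\pc_u(\p,\vsi)\le\sum_{e\in o_u}f_e\bigl(\lsel_e(\vsi)+\tsl+1\bigr)$; from then on only selfish loads circulate. The cross term is then $\sum_e a_e o_e\lsel_e(\vsi)$, \lref{l:ineq} produces $\sum_e a_e\lsel_e(\vsi)^2$, which is chargeable to $\sum_u(1-p_u)\pc_u(\p,\vsi)$ exactly as you describe for your ``selfish part'', and the entire $\tsl$-dependence surfaces as the factor $(\tsl+1)$ multiplying $\sum_e o_e f_e(o_e)\le n\cdot\OPT(\G_\Psi)$ --- not through the malicious part of the equilibrium's quadratic cost. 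Equivalently, within your setup the repair is: split the cross term $\sum_e a_eo_e\delta_e=\sum_e a_eo_e\lsel_e+\sum_e a_eo_e\lmal_e$, bound the second summand directly by $\tsl\sum_e a_eo_e^2\le\tsl\cdot n\cdot\OPT(\G_\Psi)$ using $\lmal_e\le\tsl$ pointwise and $o_e\le o_e^2$, and feed only the first summand into \lref{l:ineq}. With that single change your argument goes through and reproduces the stated bound; as written, it does not.
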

\begin{proof}
Let $\Psi$ be an arbitrary malicious Bayesian congestion game from $\games(\tsl)$
and let $\Gamma_\Psi$ be the corresponding (non-malicious)
congestion game.
Let $\Q$ be an arbitrary Bayesian Nash equilibrium for $\Psi$. 
Furthermore, let $\s^*$ be an optimum pure strategy profile for $\Gamma_\Psi$.
For each player $u\in\N$, we have
\begin{align*}
\PC_u(\p,\Q) 
&=   \sum_{\vsi \in\S^2} q(\vsi) \cdot \PC_u(\p,\vsi)\\
&\le \PC_u(\p,(\Q_{-u^s},s^*_u))\\
&= \sum_{\vsi \in\S^2} q(\vsi)
   \cdot \PC_u(\p,(\vsi_{-u^s},s^*_u))\\
&= \sum_{\vsi \in\S^2} q(\vsi)
   \sum_{e\in s^*_u} f_e( \lsel_e(\vsi_{-u}) + \lmal_e(\vsi_{-u})+1)\\
&\le  \sum_{\vsi \in\S^2} q(\vsi)
   \sum_{e\in s^*_u} f_e( \lsel_e(\vsi) + \tsl +1),
\end{align*}
where the first inequality follows since $\Q$ is a Bayesian Nash equilibrium and the 
second inequality holds,  
since $\lmal_e(\vsi_{-u})\le \lmal_e(\vsi)  \le \tsl$
for all $e\in E$. 
So, we get
\begin{align*}
{\sum_{u\in\N} (1-p_u) \cdot \PC_u(\p,\Q)}
&\le \sum_{\vsi \in\S^2} q(\vsi)
     \sum_{u\in\N} \sum_{e\in \sigma^*(u)} 
             (1-p_u) \cdot f_e( \lsel_e(\vsi) + \tsl + 1)\\
&\le \sum_{\vsi \in\S^2} q(\vsi)
   \sum_{e\in E} (1-p_{\min}) \cdot \lsel_e(\s^*) \cdot f_e( \lsel_e(\vsi) + \tsl + 1)\\
&\le (1-p_{\min}) \cdot \sum_{\vsi \in\S^2} q(\vsi)
   \sum_{e\in E} \lsel_e(\s^*) \cdot f_e(\tsl+1)\\
&\phantom{\le} + (1-p_{\min}) \cdot \sum_{\vsi \in\S^2} q(\vsi)
\sum_{e\in E} a_e \cdot \lsel_e(\vsi) \cdot \lsel_e(\s^*)\\
\end{align*}
Observe, that $\lsel_e(\s^*)\ge 1$ and thus 
$\lsel_e(\s^*)\le \lsel_e(\s^*)^2$. Moreover, by applying
Lemma~\ref{l:ineq} with $x=\lsel_e(\vsi)$ and $y=\lsel_e(\s^*)$, we get 
\begin{align*}
\lefteqn{\sum_{u\in\N} (1-p_u) \cdot \PC_u(\p,\Q)} \\
&\le(1-p_{\min}) \cdot (\tsl+1)\cdot \sum_{\vsi \in\S^2} q(\vsi)
   \sum_{e\in E} \lsel_e(\s^*) \cdot f_e( \lsel_e(\s^*))\\
&\phantom{=} + (1-p_{\min}) \cdot c\cdot \sum_{\vsi \in\S^2} q(\vsi)
         \sum_{e\in E} a_e \cdot \lsel_e(\vsi)^2\\
&\phantom{=} + (1-p_{\min}) \cdot \frac{1}{4c}\cdot \sum_{\vsi \in\S^2} q(\vsi)
         \sum_{e\in E} a_e \cdot \lsel_e(\s^*)^2\\
&\le (1-p_{\min}) (\tsl+1+\frac{1}{4c}) \sum_{\vsi \in\S^2} q(\vsi)
   \sum_{e\in E} \lsel_e(\s^*) \cdot f_e( \lsel_e(\s^*))\\
&\phantom{=} + c\cdot \sum_{\vsi \in\S^2} q(\vsi)
         \sum_{e\in E}  \lsel_e(\vsi) f_e(\lsel_e(\vsi))\\
&= (1-p_{\min}) (\tsl+1+\frac{1}{4c})\cdot 
   \sum_{u\in\N} \sum_{e\in s^*_u}  f_e( \lsel_e(\s^*))\\
&\phantom{=} + c\cdot \sum_{\vsi \in\S^2} q(\vsi)
         \sum_{u\in\N} \sum_{e\in \sigma(u^s)} (1-p_u) \cdot f_e(\lsel_e(\vsi))\\
&\le (1-p_{\min}) (\tsl+1+\frac{1}{4c})\cdot 
   \sum_{u\in\N} \sum_{e\in s^*_u}  f_e( \lsel_e(\s^*))\\
&\phantom{=} + c\cdot \sum_{\vsi \in\S^2} q(\vsi)
         \sum_{u\in\N} \sum_{e\in \sigma(u^s)} (1-p_u) \cdot f_e(\lsel_e(\vsi_{-u})+1)\\
&\le (1-p_{\min})(\tsl+1+\frac{1}{4c})\cdot  \sum_{u\in\N} \PC_u(\s^*) 
\\&\phantom{\le} 
 + c\cdot   \sum_{u\in\N} (1-p_u) \cdot \PC_u(\p,\Q) \label{e:1} \\
\end{align*}
It follows that 
\begin{align}
   {\frac{\SC(\p,\Q)}
        {\SC(\s^*)}}
&=
\frac{n}{n-\tsl} \cdot \frac{\sum_{u\in\N} (1-p_u) \cdot \PC_u(\Q)}{\sum_{u\in\N} \PC_u(\s^*)} \nonumber\\
&\le \frac{n}{n-\tsl} (1-p_{\min})\frac{\tsl+1+\frac{1}{4c}}
         {1-c}
\end{align}
Now, choosing $c=\frac{-1+\sqrt{5+4\tsl}}{4(\tsl+1)}$ yields 
\begin{align}
\frac{\tsl+1+\frac{1}{4c}}
         {1-c}
&= \frac{\tsl+1+\frac{\tsl+1}{-1+\sqrt{5+4\tsl}}}
         {1-\frac{-1+\sqrt{5+4\tsl}}{4(\tsl+1)}} 
         \nonumber \\
&=\frac{4(\tsl+1)^2(1+\frac{1}{-1+\sqrt{5+4\tsl}})}
            {5+4\tsl-\sqrt{5+4\tsl}}
            \nonumber \\
&=\frac{4(\tsl+1)^2(1+\frac{1+\sqrt{5+4\tsl}}{4+4\tsl})}
            {5+4\tsl-\sqrt{5+4\tsl}}
            \nonumber \\
 &=\frac{(\tsl+1)(5+4\tsl+\sqrt{5+4\tsl})}
            {5+4\tsl-\sqrt{5+4\tsl}}
            \nonumber \\
 &=\frac{(\tsl+1)(\sqrt{5+4\tsl}+1)}
            {\sqrt{5+4\tsl}-1}
            \nonumber \\
 &=\frac{(\tsl+1)(\sqrt{5+4\tsl}+1)^2}
            {4+4\tsl}
            \nonumber \\
  &=\frac{1}{4} (4\tsl+ 6 + 2\sqrt{5+4\tsl})
  \nonumber \\
  &= \tsl + \frac{3+ \sqrt{5+4\tsl}}{2} \label{e:2}
\end{align}
The theorem follows by combining (\ref{e:1}) and (\ref{e:2}) and since $\Q$ is an arbitrary Bayesian
Nash equilibrium. 
\qed
\end{proof}

For the case of identical type probabilities we can provide a better upper bound on the Price of Byzantine Anarchy. 
Observe that for identical type probabilities, $\tsl=p\cdot n$ and $p_{\min}=p$.
As an immediate corollary to Theorem~\ref{t:pob}, we get:

\begin{corollary}
\label{c:pob}
Consider the class of malicious Bayesian congestion games $\games(\tsl)$ 
with affine latency functions and identical type probability $p$.
Then,
$$
    \pob(\tsl) \le \tsl + \frac{3+\sqrt{5+4\tsl}}{2}.
$$
\end{corollary}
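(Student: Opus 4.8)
The plan is to obtain the bound by direct specialization of Theorem~\ref{t:pob} to the case of identical type probabilities, exploiting the two simplifications already recorded in the text: when $p_u=p$ for every player $u\in\N$ we have $p_{\min}=p$ and $\tsl=p\cdot n$. So the entire proof amounts to substituting these two identities into the general bound and simplifying the resulting prefactor.

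First I would isolate the leading factor $\frac{n}{n-\tsl}\,(1-p_{\min})$ appearing in Theorem~\ref{t:pob} and evaluate it under these identities. Substituting $\tsl=pn$ gives $n-\tsl=n(1-p)$, hence $\frac{n}{n-\tsl}=\frac{1}{1-p}$; substituting $p_{\min}=p$ then yields
$$
\frac{n}{n-\tsl}\,(1-p_{\min})=\frac{1}{1-p}\cdot(1-p)=1 .
$$
Thus the whole prefactor collapses to $1$, independently of $n$ and $p$, leaving only the bracketed term from Theorem~\ref{t:pob}.

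With the prefactor equal to $1$, the bound of Theorem~\ref{t:pob} reduces immediately to
$$
\pob(\tsl)\le \tsl+\frac{3+\sqrt{5+4\tsl}}{2},
$$
which is exactly the claimed inequality. Since every step is a substitution into an already-established inequality, there is no genuine obstacle here; the only point worth checking is that the cancellation of the factor $(1-p)$ is legitimate, which it is because $0<p<1$ forces $1-p>0$ (and correspondingly $n-\tsl=n(1-p)>0$, so that the normalization in the definition of $\SC(\Psi,\Q)$ is well defined). This confirms that the corollary is indeed immediate.
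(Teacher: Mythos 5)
Your proposal is correct and is exactly the paper's argument: the paper also obtains the corollary by observing that for identical type probabilities $\tsl=p\cdot n$ and $p_{\min}=p$, so the prefactor $\frac{n}{n-\tsl}(1-p_{\min})$ in Theorem~\ref{t:pob} collapses to $1$. Your extra remark about the legitimacy of cancelling $1-p$ is a harmless (and accurate) refinement of what the paper treats as immediate.
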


We proceed by introducing a malicious Bayesian congestion game that is parameterized by 
a parameter $\alpha$. In the remainder of the paper, we will make use of this construction twice, 
each time with a different parameter $\alpha$.

\begin{example}
\label{ex}
Given some $\alpha>0$,
construct a
malicious Bayesian congestion game $\Gamma(\alpha)$ with linear latency functions, 
$n\ge 3$ players and identical
type probability $p$ and $|E|=2n$ as follows:
Let $E=E_1\cup E_2$ with 
$E_1 = \{ g_1, \dots, g_n \}$ and
$E_2 = \{ h_1, \dots, h_n \}$.
Each player $u\in\{1,\ldots n\}$ has three strategies in her strategy set. 
So, $S_u=\{s_u^1, s_u^2, s_u^3\}$ with
$s_u^1=\{ g_u, h_u \}, s_u^2 = \{g_{u+1},h_{u+1},h_{u+2}\}$ and 
$s_u^3=E_1\cup E_2$,
where
$g_j = g_{j - n}$ and $h_j = h_{j - n}$
for $j > n$.
\\
Each resource $e\in E_1$  has a latency function $f_e(\delta)= \alpha\cdot\delta$ 
whereas the resources $e\in E_2$ share the identity as their latency
function, i.e. $f_e(\delta)=\delta$.
\end{example} 

The following theorem makes use of Example~\ref{ex} to show the following lower bound 
on the Price of Byzantine Anarchy.

\begin{theorem}
\label{t:lbi}
Consider the class of malicious Bayesian congestion games $\games(\tsl)$ with
linear latency functions and
identical type probability $p$.
Then ,
 $$
    \pob(\tsl) \ge \tsl + 2.
$$
\end{theorem}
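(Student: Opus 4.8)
The plan is to instantiate Example~\ref{ex} and, for a carefully chosen slope $\alpha$, exhibit a pure Bayesian Nash equilibrium whose social cost is essentially $\tsl+2$ times the optimum. Concretely, I would take the profile $\vsi$ in which every selfish type-agent $u^s$ plays its \emph{second} strategy $s_u^2=\{g_{u+1},h_{u+1},h_{u+2}\}$ and every malicious type-agent $u^m$ plays its \emph{third} strategy $s_u^3=E_1\cup E_2$. Holding $\tsl=p\cdot n$ fixed while letting $n$ grow (so $p=\tsl/n\to 0$), every malicious agent then loads each resource by $p$, so the expected malicious load on each resource is $\tsl$ (and $\tsl-p$ after deleting one player). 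The structural feature I would exploit is that under $s^2$ each $E_2$-resource $h_{u+1}$ and $h_{u+2}$ is shared by exactly two selfish type-agents, while the $E_1$-resource $g_{u+1}$ carries only one.

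First I would verify that the malicious type-agents are satisfied. Since placing $u^m$ on a resource $e$ raises the private cost of every selfish agent $v\neq u$ using $e$ by exactly $a_e\cdot p$, its total contribution to $\sum_v\PC_v(\p,\vsi)$ is $p\sum_{e\in\sigma(u^m)}a_e\,\bigl|\{v\neq u: e\in\sigma(v^s)\}\bigr|$, a sum of non-negative terms over the resources it occupies. As $s_u^3=E_1\cup E_2$ contains \emph{all} resources, it maximises this sum; hence no malicious agent can raise the social cost by deviating. This step is essentially free and holds for any selfish configuration.

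Next I would pin down $\alpha$ so that the selfish type-agents are satisfied. Using the $\vsi_{-u}$ convention, a selfish agent on $s_u^2$ pays $\alpha(\tsl-p+1)$ on $g_{u+1}$ (load $\tsl-p+1$) and $\tsl-2p+2$ on each of the two shared $h$-resources (load $(1-p)+(\tsl-p)+1$), for a total of $\alpha(\tsl-p+1)+2(\tsl-2p+2)$; deviating to $s_u^1=\{g_u,h_u\}$ costs $\alpha(\tsl-2p+2)+(\tsl-3p+3)$. A short computation shows these two are \emph{equal} precisely when $\alpha=\frac{\tsl-p+1}{1-p}$, so at this threshold slope $s_u^2$ is a best response to $s_u^1$; deviating to $s_u^3$ uses all $2n$ resources and costs $\Omega(n)$, hence is strictly worse for $n$ large. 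With these checks $\vsi$ is a genuine pure Bayesian Nash equilibrium.

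Finally I would compute the ratio. The social cost equals the common private cost $\alpha(\tsl-p+1)+2(\tsl-2p+2)$, while $\OPT(\Gamma_\Psi)\le\alpha+1$, achieved in the non-malicious game by putting every player on $s_u^1$ (each resource then has load one). Writing $q=1-p$ and $A=\tsl-p+1$, so that $\alpha=A/q$, the ratio simplifies to $(A+q)+\frac{q^2}{A+q}=(\tsl-2p+2)+\frac{(1-p)^2}{\tsl-2p+2}$, which tends to $(\tsl+2)+\frac{1}{\tsl+2}>\tsl+2$ as $p\to0$ (equivalently $n\to\infty$ with $p=\tsl/n$). Thus for $n$ large enough the ratio exceeds $\tsl+2$, giving $\pob(\tsl)\ge\tsl+2$. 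I expect the delicate point to be the exact verification of the selfish indifference condition and the bookkeeping of loads under the $\vsi_{-u}$ convention: it is precisely the tie between $s^1$ and $s^2$, together with the two doubled $h$-resources, that upgrades the malicious load $\tsl$ into the additive ``$+2$'' rather than the weaker ``$+1$'' one obtains when all selfish agents sit on $s^1$.
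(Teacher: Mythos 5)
Your proposal is correct and is essentially the paper's own proof: you use the same instance (Example~\ref{ex}) with the same slope, since $\frac{\tsl-p+1}{1-p}=\frac{1+(n-1)p}{1-p}$, the same equilibrium profile ($\sigma(u^s)=s_u^2$, $\sigma(u^m)=s_u^3$), the same optimum upper bound $\alpha+1$ from everyone playing $s_u^1$, and the same limit $p\to 0$ with $\tsl=np$ fixed. The only difference is presentational — you spell out the equilibrium verification (dominance of $s^3$ for malicious agents, the exact $s^1$/$s^2$ indifference, and rejection of $s^3$ by selfish agents) that the paper asserts without detail, and your closing algebra $(\tsl+2-2p)+\frac{(1-p)^2}{\tsl+2-2p}$ agrees exactly with the paper's $2(1-p)+\frac{(1+(n-1)p)^2}{2+(n-2)p}$.
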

\begin{proof}
Consider the malicious Bayesian congestion game $\Psi=\Psi(\alpha)$ given in Example~\ref{ex} with 
$\alpha=\frac{1+(n-1)p}{1-p}$. 
Observe that $\Delta=n\cdot p$.

Obviously , the optimum allocation $\s^*$ for the corresponding non-malicious game $\Gamma_\Psi$ 
is for each player 
$u\in\N$ to choose strategy  $s_u^1$. This yields 
$\SC(\Gamma_\Psi,\s^*) = 1+ \alpha = \frac{2+(n-2)p}{1-p}$.  

On the other hand, if $\sigma(u^m)=s_u^3$ and $\sigma(u^s)=s_u^2$ for all player $u\in \N$, then $\vsi$
is a (pure) Bayesian Nash equilibrium for $\Psi$, with
\begin{align*}
    \SC(\Psi,\vsi) &= 2(1+(1-p)+(n-1)p) + (1+(n-1)p)\cdot \alpha \\
    &=  \frac{2(1-p)(2+(n-2)p) + (1+(n-1)p)^2}{1-p} \\
\end{align*}
It follows that
\begin{align*}
\frac{\SC(\Psi,\vsi)}{\SC(\Gamma_\Psi,\s^*)}
&= 2(1-p) + \frac{ (1+(n-1)p)^2}{2+(n-2)p}\\
&= 2(1-p) + \frac{ 1+(n-1)p (2+ (n-1)p)}{2+(n-2)p}\\
&> 2-3p +n\cdot p\\
&= \tsl + 2 -3p.
\end{align*}
The Theorem follows for $p\rightarrow 0$, which implies $n\rightarrow\infty$.
\qed
\end{proof}

Recall that the Price of Anarchy of (non-malicious) congestion games with affine latency functions is
$\frac{5}{2}$ \cite{CK05}.
By combining this with Corollary~\ref{c:pob} and Theorem~\ref{t:lbi} we get:

\begin{corollary}
\label{c:pom}
Consider the class of malicious Bayesian congestion games $\games(\tsl)$ with 
affine latency functions and
identical type probability.
Then,
$$
   \pom (\tsl) = \Theta (\tsl).
$$
\end{corollary}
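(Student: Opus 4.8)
The plan is to derive $\pom(\tsl)=\Theta(\tsl)$ by combining the upper bound from Corollary~\ref{c:pob}, the lower bound from Theorem~\ref{t:lbi}, and the known value $\poa=\frac{5}{2}$ for (non-malicious) congestion games with affine latency functions. Recall that by definition $\pom(\tsl)=\frac{\pob(\tsl)}{\pob(0)}$, and that $\pob(0)$ coincides with the ordinary Price of Anarchy $\poa$. Hence the denominator is the fixed constant $\pob(0)=\frac{5}{2}$, independent of $\tsl$, and the entire asymptotic behavior in $\tsl$ is governed by the numerator $\pob(\tsl)$. The strategy is therefore to show that $\pob(\tsl)=\Theta(\tsl)$ and divide by the constant $\frac{5}{2}$.

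First I would establish the upper bound. By Corollary~\ref{c:pob}, for the class of games with affine latency functions and identical type probability we have
\begin{align*}
  \pob(\tsl) \le \tsl + \frac{3+\sqrt{5+4\tsl}}{2}.
\end{align*}
Since $\sqrt{5+4\tsl}=O(\sqrt{\tsl})=O(\tsl)$ for $\tsl$ bounded away from $0$ (and the additive constant is likewise absorbed), this gives $\pob(\tsl)=O(\tsl)$. Dividing by $\pob(0)=\frac{5}{2}$ yields $\pom(\tsl)=O(\tsl)$.

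Next I would establish the matching lower bound. Theorem~\ref{t:lbi} gives, for the same class (linear latency functions, identical type probability),
\begin{align*}
  \pob(\tsl) \ge \tsl + 2,
\end{align*}
so $\pob(\tsl)=\Omega(\tsl)$. Dividing by $\pob(0)=\frac{5}{2}$ then gives $\pom(\tsl)=\Omega(\tsl)$. Combining the two bounds yields $\pom(\tsl)=\Theta(\tsl)$, as claimed. The only mild subtlety to check is the compatibility of hypotheses: the upper bound of Corollary~\ref{c:pob} is stated for affine latencies while Theorem~\ref{t:lbi} uses linear latencies, but linear is a special case of affine, so the lower-bound construction already lives inside the class covered by the upper bound, and the bounds apply to the same $\games(\tsl)$. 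I do not expect any real obstacle here: the entire argument is just the sandwiching $\tsl+2 \le \pob(\tsl) \le \tsl+\frac{3+\sqrt{5+4\tsl}}{2}$ together with the constancy of the denominator $\pob(0)=\frac{5}{2}$, so the main (and minor) point to be careful about is simply stating the asymptotics in $\tsl$ cleanly rather than any hard estimate.
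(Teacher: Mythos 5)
Your proposal is correct and matches the paper's own argument: the paper likewise obtains the corollary by combining the upper bound of Corollary~\ref{c:pob}, the lower bound of Theorem~\ref{t:lbi}, and the fact that $\pob(0)=\poa=\frac{5}{2}$ for affine latency functions \cite{CK05}. Your remark that the linear lower-bound instances lie inside the affine class is the right (and only) compatibility check needed.
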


For certain congestion games, introducing malicious types might also be beneficial to the system, in the
sense that the social cost of the worst case equilibrium (one that maximizes social cost) decreases. 
To capture this, we define the {\em Windfall of Malice}. The term Windfall of Malice is due to \cite{BKP07}.
For a malicious Bayesian congestion game $\Psi$, denote $\wom(\Psi)$ as the ratio between the 
worst case Nash equilibrium of the corresponding congestion game $\G_\Psi$ and 
the worst case Bayesian Nash equilibrium of $\Psi$. We show, 

\begin{theorem}
\label{t:wom}
For each $\epsilon>0$ there is a malicious Bayesian congestion game $\Psi$ with 
linear latency functions and
identical type probability, such that
$$\wom(\Psi)\ge \frac{5}{2}-\epsilon.$$
\end{theorem}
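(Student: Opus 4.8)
The plan is to reuse the family $\Gamma(\alpha)$ from Example~\ref{ex}, now with slope $\alpha=1$ (so $E_1$ and $E_2$ both carry $f_e(\delta)=\delta$), with any fixed number of players $n\ge 3$, and with a type probability $p$ that I will eventually send to $0$. Write $\Psi$ for this game and $\Gamma_\Psi$ for its non-malicious shadow. I want to exhibit one expensive Nash equilibrium of $\Gamma_\Psi$ and then show that $\Psi$ has an essentially unique, cheap Bayesian Nash equilibrium; the quotient of the two social costs will tend to $\tfrac52$.

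First I would pin down the relevant values in $\Gamma_\Psi$. Letting every player choose $s_u^1=\{g_u,h_u\}$ keeps every resource at load $1$ and gives social cost $\alpha+1=2$; since any other profile only raises loads, this is $\OPT(\Gamma_\Psi)$. Letting every player choose $s_u^2=\{g_{u+1},h_{u+1},h_{u+2}\}$ puts load $1$ on each $g_j$ and load $2$ on each $h_j$, so the private cost on $s^2$ is $\alpha+4=5$, the only tempting deviation (to $s^1$) costs $2\alpha+3=5$, and $s^3=E_1\cup E_2$ is hugely expensive. At $\alpha=1$ the two costs coincide, so ``all $s^2$'' is a (weak) Nash equilibrium of $\Gamma_\Psi$ with social cost $5$; hence the worst-case Nash equilibrium of $\Gamma_\Psi$ costs at least $5$.

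The heart of the argument, and the step I expect to be the main obstacle, is to show that adding even a little malice collapses every bad equilibrium, leaving $\Psi$ a unique Bayesian Nash equilibrium of cost close to $2$. I would argue in three deductions. (i) For a selfish type-agent $s^3$ is dominated by $s^1$ (it contains $s^1$ plus further resources of positive latency), so no selfish agent ever plays $s^3$; thus selfish agents sit on $s^1$ or $s^2$ and their usage is spread around the cycle. (ii) Since latencies are linear, the gain a malicious agent $u$ extracts from covering a resource $e$ is additive and equals $p(1-p)\,a_e\,|\{v\neq u: e\in\sigma(v^s)\}|\ge 0$; because $s_u^3$ contains every resource while $s_u^1,s_u^2$ miss resources that, by (i), are used by other selfish agents (this already holds for $n=3$), $s^3$ is the strict best response for every malicious agent, so all malicious agents play $s^3$. (iii) With all malicious mass on $s^3$ the malicious load seen by a deviating player is the uniform constant $(n-1)p$ on every resource, and comparing the cost of $s^1$ and $s^2$ for a selfish agent—keeping only the at most four neighbouring players $u\pm1,u\pm2$ that can touch the five relevant resources $g_u,h_u,g_{u+1},h_{u+1},h_{u+2}$—shows $\PC_u(s^2)-\PC_u(s^1)$ is minimised by the neighbour configuration $(a,b,c,d)=(1,1,0,0)$ and equals $-\alpha(1-p)+(n-1)p+1$. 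This is strictly positive exactly when $\alpha<\frac{1+(n-1)p}{1-p}$, which holds for $\alpha=1$ and every $p>0$; hence $s^1$ strictly dominates $s^2$ and every selfish agent plays $s^1$. Combining (i)--(iii), the unique Bayesian Nash equilibrium of $\Psi$ is ``all selfish on $s^1$, all malicious on $s^3$,'' with $\PC_u=(\alpha+1)(1+(n-1)p)$ and social cost $\SC(\Psi,\vsi)=(\alpha+1)(1+(n-1)p)=2(1+(n-1)p)$.

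Finally I would combine the estimates: the worst Nash equilibrium of $\Gamma_\Psi$ costs at least $5$ while the unique (hence worst) Bayesian Nash equilibrium of $\Psi$ costs $2(1+(n-1)p)$, so
$$\wom(\Psi)\ \ge\ \frac{\alpha+4}{(\alpha+1)\bigl(1+(n-1)p\bigr)}\ =\ \frac{5}{2\bigl(1+(n-1)p\bigr)}.$$
Fixing any $n\ge 3$ and letting $p\to 0$ drives $(n-1)p\to 0$ and the right-hand side to $\tfrac52$, so for each $\epsilon>0$ a small enough $p$ yields $\wom(\Psi)\ge\tfrac52-\epsilon$. The delicate point, which I must handle carefully, is the uniqueness in step (iii): the strict-dominance inequality has to be checked against \emph{all} profiles of the remaining players, not merely at an equilibrium, which is exactly why I reduce it to the finitely many neighbour configurations and minimise explicitly.
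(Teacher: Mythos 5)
Your proposal is correct and follows essentially the same route as the paper: the same construction (Example~\ref{ex} with $\alpha=1$), the same worst-case Nash equilibrium ``all $s^2$'' of $\Gamma_\Psi$ with social cost $5$, the same argument that malicious agents strictly prefer $s^3$ and selfish agents then strictly prefer $s^1$, yielding a unique Bayesian Nash equilibrium of cost $2(1+(n-1)p)\to 2$ as $p\to 0$. The only differences are cosmetic: the paper fixes $n=3$ and verifies the selfish dominance by an explicit computation over the two mixed probabilities $q_2,q_3$, whereas you keep $n\ge 3$ general and minimize over the finitely many neighbour configurations, which checks out to the same bound $-\alpha(1-p)+1+(n-1)p>0$.
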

\begin{proof}
Consider the malicious Bayesian congestion game $\Psi=\Psi(\alpha)$ given in Example~\ref{ex} with 
$n=3$ and 
$\alpha=1$. This game (for $n\ge 3$) 
was already used in \cite{CK05} to proof a lower bound on the Price of Anarchy 
for the corresponding non-malicious congestion games.
For the congestion game $\Gamma_\Psi$ that corresponds to $\Psi$, 
all players $u$ choosing $s_u^2$ is a Nash equilibrium $\s$ that maximizes 
social cost and $\SC(\Gamma_\Psi,\s)=5$.

Now, consider the malicious Bayesian congestion game $\Psi$, where $p>0$.
First observe that choosing $s_u^3$ is always a strictly dominant strategy for the malicious type-agent 
$u^m$ for all $u\in\N$. Moreover, $u^s$ will never choose $s_u^3$. For $i\in\{2,3\}$, 
let $q_i$ be the probability that $u_i^s$ chooses $s_{u_i}^1$. Then $u_i^s$ chooses $s_{u_i}^2$ with 
probability $(1-q_i)$. We will show that for all $p>0$, the selfish type-agent $u_1^s$ experiences a strictly lower expected latency, if she chooses $s_{u_1}^1$ and not $s_{u_1}^2$.

On the one hand,
if $u_1^s$ chooses $s_{u_1}^1$ then her expected latency is:
\begin{align*}
&\underbrace{1+2p+(1-q_2+1-q_3)(1-p)}_{h_1} + \underbrace{1+2p + (1- q_2)(1-p)}_{g_1} \\
&= 2+ 4p +(1-p)(3 -2 q_2 - q_3)
\end{align*}
On the other hand,
if $u_1^s$ chooses $s_{u_1}^2$ then her expected latency is:
\begin{align*}
&\underbrace{1+2p+(q_2+1-q_3)(1-p)}_{h_3} + \underbrace{1+2p + q_3(1-p)}_{g_2} 
+\underbrace{1+2p+(1-q_2+q_3)(1-p)}_{h_2}\\
&= 3+ 6p + (1-p)(2+q_3) .
\end{align*}
However,
\begin{align*}
&3+ 6p + (1-p)(2+q_3) - (2+ 4p +(1-p)(3 -2 q_2 - q_3))\\
&= 1 + 2p + (1-p)(-1+2q_2+2q_3)\\
&\ge 3p.
\end{align*}
So $u_1^s$ is always better of by choosing $s_{u_1}^1$. 

By symmetry it follows that for each $p>0$ there is a unique (pure) Bayesian Nash equilibrium $\vsi$
where $\sigma(u^s)=s_u^1$ and $\sigma(u^m)=s_u^3$ for all players $u\in\N$. For its social cost we get
$\SC(\Psi,\vsi)=2+4p$.

So, for each $\epsilon>0$ there is a sufficiently small $p$, such that
\begin{align*}
\wom(\Psi)=\frac{\SC(\Gamma_\Psi,\s)}{\SC(\Psi,\vsi)}=\frac{5}{2+4p}\ge \frac{5}{2}-\epsilon.
\end{align*}
This finishes the proof of the theorem.
\qed
\end{proof}
This is actually a tight result, since for the considered class of malicious Bayesian games the Windfall of
Malice cannot be larger than the Price of Anarchy of the corresponding class of congestion games which 
was shown to be $\frac{5}{2}$ in \cite{CK05}.

\section{Conclusion and Open Problems}
In this paper, we have introduced and studied a new extension to congestion games, that we call malicious Bayesian congestion games. 
More specifically, we have studied problems concerned with the complexity of deciding the existence of
pure Bayesian Nash equilibria. Furthermore, we have presented results on the Price of Malice.

Although we were able to derive multiple interesting results, this work also gives rise to many interesting
open problems. We conclude this paper by stating those, that we consider the most prominent ones.

\begin{itemize}
\item
Our \NP-completeness result in Theorem~\ref{t:npcscg} holds even for linear latency functions, 
identical type probabilities, and if all strategy sets are singleton sets of resources. 
However, if such games are further restricted to symmetric 
games and identical linear latency functions, 
then deciding the existence of a pure Bayesian Nash equilibrium
becomes a trivial task. We believe that this task can also be performed in polynomial time for  
{\em non-identical} linear latency functions and symmetric strategy sets. 
\item
Another, interesting problem in 
this perspective is to reduce the constants in Theorem~\ref{t:npcscg_ex} or show that this is not possible.
\item
Although the upper bound in Corollary~\ref{c:pob} and the corresponding 
lower bound in Theorem~\ref{t:lbi} are asymptotically tight, there is still potential to improve. 
We conjecture that in this case $\pob(\tsl)=\tsl+O(1)$.
\item
We believe that the concept of malicious Bayesian games is very interesting and deserves further
study also in other scenarios. We hope, that our work will encourage others to study such malicious 
Bayesian games.
\end{itemize}

\section{Acknowledgments}
We are very grateful to Christos Papadimitriou and Andreas Maletti for many fruitful discussions on the topic.
Moreover, we thank Florian Schoppmann for his helpful comments on an early version of this paper.

\end{document}